\newcommand{\BO}[1]{{O}\left(#1\right)}
\newcommand{\BT}[1]{{\Theta}\left(#1\right)}
\newcommand{\BOM}[1]{{\Omega}\left(#1\right)}
\newtheorem{lemma}{Lemma}
\newtheorem{theorem}{Theorem}
\renewenvironment{proof}{\begin{trivlist}
                         \item[] {\em Proof:}}{\hfill $\Box$
                       \end{trivlist}}
\newcommand{\X}{{X}}
\newcommand{\Y}{{Y}}
\newcommand{\Z}{{Z}}
\newcommand{\B}{\mathcal{B}}
\newcommand{\F}{{F}}
\newcommand{\SMerge}{$S$-\textsf{Merge}}
\newcommand{\SSort}{$S$-\textsf{Sort}}
\newcommand{\SPurifyingMerge}{$S$-\textsf{PurifyingMerge}}
\newcommand{\SBucketSort}{$S$-\textsf{BucketSort}}
\newcommand{\PurifyingMerge}{\textsf{PurifyingMerge}}
\newcommand{\NaiveSort}{\textsf{NaiveSort}}
\newcommand{\UnbalancedMerge}{\textsf{UnbalancedMerge}}
\newcommand{\Insert}{\textsf{Insert}}
\newcommand{\Deletemin}{\textsf{Deletemin}}
\newcommand{\Push}{\textsf{Push}}
\newcommand{\Pull}{\textsf{Pull}}
\newcommand{\Peekmin}{\textsf{Peekmin}}
\begin{document}
\begin{frontmatter}
\title{Exploiting non-constant safe memory in resilient algorithms and data structures}

\author{Lorenzo De Stefani}
\author{Francesco Silvestri\corref{cor1}}

\address[dei]{Dipartimento di Ingegneria dell'Informazione, University of Padova
\\ Via Gradenigo 6/B, I-35131 Padova, Italy\\
\{destefan,silvest1\}@dei.unipd.it}

\journal{Theoretical Computer Science}
\cortext[cor1]{Corresponding author. Phone number: +39 049 8277954}

\begin{abstract}
We extend the Faulty RAM model by Finocchi and Italiano (2008) by adding a safe
memory of arbitrary size $S$, and we then derive tradeoffs between the
performance of resilient algorithmic techniques and the size of the safe memory.
Let $\delta$ and $\alpha$ denote, respectively, the maximum amount of faults
which can happen during the execution of an algorithm and the actual number of
occurred faults, with $\alpha \leq \delta$. We propose a resilient algorithm for
sorting  $n$ entries which requires $\BO{n\log n+\alpha (\delta/S +  \log S)}$
time and uses $\BT{S}$ safe memory words. Our algorithm outperforms previous
resilient sorting algorithms which do not exploit the available safe memory and
require $\BO{n\log n+ \alpha\delta}$ time. Finally, we exploit our sorting
algorithm for deriving a resilient priority queue.  Our implementation uses
$\BT{S}$ safe memory words and $\BT{n}$ faulty memory words for storing $n$
keys, and requires $\BO{\log n + \delta/S}$ amortized time for each insert and
deletemin operation. Our resilient priority queue improves the $\BO{\log n +
\delta}$ amortized time required by the state of the art.
\end{abstract}

\begin{keyword}
resilient algorithm \sep resilient data structure \sep memory errors 
\sep sorting  \sep priority queue \sep tradeoffs \sep fault tolerance
\end{keyword}

\end{frontmatter}

\section{Introduction}
Memories of modern computational platforms are not completely reliable since a
variety of causes, including cosmic radiations and alpha particles~\cite{B05},
may lead to a transient failure of a memory unit and to the loss or corruption
of its content. Memory errors are usually silent and hence an application may
successfully terminate even if the final output is irreversibly corrupted. This
fact has been recognized in many systems, like in Sun Microsystems servers at
major customer sites~\cite{B05} and in Google's server fleets~\cite{SPW11}.
Eventually, a few works have also shown that memory faults can cause serious
security vulnerabilities (see, e.g.,~\cite{GA03}).

As hardware solutions, like Error Correcting Codes (ECC), are costly and
reduce space and time performance, a number of algorithms and data structures
have been proposed that provide (almost) correct solutions even when silent
memory errors occur. Algorithmic approaches for dealing with unreliable
information have been widely targeted in literature under different settings,
and we refer to~\cite{FINO07} for a survey. In particular, a number of
algorithms and data structures, which are called \emph{resilient}, have been
designed in the \emph{Faulty RAM} (\emph{FRAM})~\cite{FI08}. In this model, an
adaptive adversary can corrupt up to $\delta$ memory cells of a large unreliable
memory at any time (even simultaneously) during the execution of an algorithm.
Resilient algorithmic techniques have been designed for many problems, including
sorting~\cite{FGI09}, selection~\cite{KT12}, dynamic programming~\cite{CFFS11},
dictionaries~\cite{FINO09}, priority queues~\cite{JMM07}, matrix multiplication
and FFT ~\cite{RE13}, K-d and suffix trees~\cite{GMV12,CDK11}. Resilient
algorithms have also been experimentally
evaluated~\cite{PI14,RE13,PetrilloGI13,PetrilloFI10}.

\subsection{Our results}
Previous results in the FRAM model assume the existence of a {safe memory}
of constant size which cannot be corrupted by the adversary and which is used for
storing crucial data such as  code and instruction counters. In this paper we
follow up the preliminary investigation in~\cite{CFFS11} studying to which
extent the size of the safe memory can affect the performance of resilient
algorithms and data structures. We enrich the FRAM model with a safe memory of
arbitrary size $S$ and then give evidence that an increased safe memory can be
exploited to notably improve the performance of resilient algorithms. In
addition to its theoretical interest, the adoption of such a model is supported
by recent research on hybrid systems that integrate algorithmic
resiliency with the (limited) amount of memory protected by hardware
ECC~\cite{LI13}. In this setting, $S$ would denote the memory that is protected
by the hardware.

Let $\delta$ and $\alpha$ denote respectively the maximum amount of faults which
can happen during the execution of an algorithm and the actual number of
occurred faults, with $\alpha \leq \delta$. In Section~\ref{sec:sort}, we show
that $n$ entries can be resiliently sorted in $\BO{n \log n + \alpha (\delta /S
+\log S)}$ time when a safe memory of size $\BT{S}$ is available in the FRAM. As
a consequence, our algorithm runs in optimal $\BT{n\log n}$ time as soon as
$\delta=\BO{\sqrt{n S\log n}}$ and $S\leq n/\log n$. When $S=\omega(1)$, our
algorithm outperforms previous resilient sorting algorithms, which do not
exploit non-constant safe memory and require $\BO{n\log n+ \alpha\delta}$
time~\cite{FGI09,KT12}. Finally, we use the proposed resilient sorting algorithm
for deriving a resilient priority queue in Section~\ref{sec:pqueue}.  Our
implementation uses $\BT{S}$ safe memory words and $\BT{n}$ faulty memory words
for storing $n$ keys, and requires $\BO{\log n + \delta/S}$ amortized time for
each insert and deletemin operation. This result improves the state of art for
which $\BO{\log n + \delta}$ amortized time is required for each
operation~\cite{JMM07}.

\subsection{Preliminaries}
As already mentioned, we use the FRAM model with a safe memory. Specifically,
the adopted model features two memories: the \emph{faulty memory} whose size is
potentially unbounded, and the \emph{safe memory} of size $S$. For the sake of
simplicity, we allow algorithms to exceed the amount of safe memory by a
multiplicative constant factor. The adversary can read the content of the faulty
memory and corrupt at any time memory words stored  in any position of the
faulty memory for up to a total $\delta$ times. Note that faults can occur
simultaneously and the adversary is allowed to corrupt a value which was already
previously altered. The safe memory can be read but not corrupted by the
adversary. A similar model was adopted in~\cite{CFFS11}, however in this paper
the adversary was not allowed to read the safe memory. We denote with
$\alpha\leq \delta$  the actual number of faults injected by the adversary
during the execution of the algorithm. Since the performance of our algorithms
do not increase as soon as $S>\delta$, we assume through the paper that $S\leq
\delta$; this assumption can be easily removed by replacing $S$ with $\min\{S,
\delta\}$ in our algorithms. 

A variable is \emph{reliably written} if it is replicated $2\delta+1$ times in
the faulty memory and its actual value is determined by majority: clearly, a
reliably written variable cannot be corrupted. We say that a value is
\emph{faithful} if it has never been corrupted and that a sequence is
\emph{faithfully ordered} if all the faithful values in it are correctly
ordered. Finally, we assume all faithful input values to be distinct, each
value to require a memory word, and that each sequence or buffer to be stored
in adjacent memory words.

\section{Resilient Sorting Algorithm}\label{sec:sort}
In the resilient sorting problem we are given a set of $n$ keys and the goal is
to correctly order all the faithful input keys (corrupted keys can be
arbitrarily positioned). We propose \emph{\SSort{}}, a resilient sorting
algorithm which runs in $\BO{n\log n+\alpha \left(\delta/S+\log S \right)}$ time
by exploiting $\BT{S}$ safe memory words. Our approach builds on the
resilient sorting algorithm in~\cite{FGI09}, however major changes are required
to fully exploit the safe memory. In particular, the proposed algorithm
forces the adversary to inject $\BT{S}$ faults in order to invalidate part of
the computation and to increase the running time by an additive
$\BO{\delta + S \log S}$ term. In contrast, $\BO{1}$ faults suffice to increase
by an additive $\BO{\delta}$ term the time of previous
algorithms~\cite{FI08,FGI09,KT12}, even when $\omega(1)$ safe memory is
available.
Our algorithm  runs in optimal $\BT{n \log n}$ time for
$\delta= \BO{\sqrt{S n\log n}}$ and $S\leq n/\log n$: this represents a 
$\BT{\sqrt{S}}$ improvement with respect to the state of the art
\cite{FGI09}, where optimality is reached for $\delta=\BO{\sqrt{n\log n}}$.

\SSort{} is based on mergesort and uses the resilient algorithm \emph{\SMerge{}}
for merging. The \SMerge{} algorithm requires $\BO{n+\alpha\left(\delta/S+\log
S\right)}$ time for merging two faithfully ordered sequences of length $n$ each
with $\BT{S}$ safe memory. \SMerge{} is structured as follows. An incomplete
merge of the two input sequences is initially computed with
\emph{\SPurifyingMerge{}}: this method returns a faithfully ordered sequence $Z$
of length at least $2(n-\alpha)$ that contains a partial merge of the input
sequences, and a sequence $\F$ with the at most $2\alpha$ remaining keys that
the algorithm has failed to insert into $Z$. Finally, keys in $\F$ are
inserted into $Z$ using the \emph{\SBucketSort{}} algorithm,
obtaining the final faithfully ordered sequence of all input values. 
Procedures \SPurifyingMerge{} and \SBucketSort{} are respectively
proposed in Sections~\ref{sec:purify} and~\ref{sec:bucket}, while
Section~\ref{sec:merge} describes the resilient algorithms \SMerge{} and
\SSort{}.

\subsection[SPurifyingMerge algorithm]{\SPurifyingMerge{}
algorithm}\label{sec:purify} 
Let $X$ and $Y$ be the faithfully ordered input sequences of length $n$ to be
merged. The \SPurifyingMerge{} algorithm returns a faithfully ordered sequence
$Z$ of length at least $2(n-\alpha)$ and a sequence $\F$ of length at most
$2\alpha$: sequence $Z$ contains part of the merging of $X$ and $Y$, while $F$
stores the input keys that the algorithm has deemed to be potentially corrupted
and has failed to insert into $Z$. The algorithm extends the \PurifyingMerge{}
algorithm presented in~\cite{FGI09} by adding a two-level cascade of
intermediate buffers, where the smallest ones are completely contained in the
safe memory. Specifically, the algorithm uses six support buffers\footnote{It
can be shown that a more optimized implementation of \SPurifyingMerge{} requires
only two buffers (i.e., $\X_2$ and $\Y_2$). However,  we describe here the
implementation with six support buffers for the sake of simplicity.}:
\begin{itemize}
 \item Buffers $\X_1$ and $\Y_1$ of length $4\delta+S$, and $\Z_1$ of length
$\delta+S/2$; they are stored in the faulty memory.
 \item Buffers $\X_2$ and $\Y_2$ of length $S$, and $\Z_2$ of length $S/2$;
they are stored in the safe memory.
\end{itemize}

\begin{figure}
\centering
\resizebox{.95\textwidth}{!}{
\input{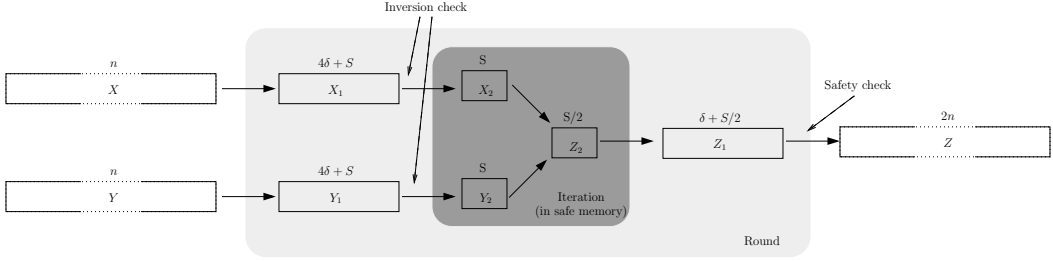}
}
\caption{Graphical representation of the \SPurifyingMerge{} algorithm. $X$ and
$Y$ are the input sequences to merge, and $Z$ is the output buffer. $X_1$, $Y_1$
and $Z_1$ are support buffers stored in the faulty memory, while $X_2$, $Y_2$
and $Z_2$ are support buffers stored in the safe memory.
The light gray (resp., dark gray) highlights the structures that are used in a
round (resp., iteration). An inversion (resp., safety) check is invoked any
time data are moved from $X_1$ to $X_2$ or from $Y_1$ to
$Y_2$ (resp., from $Z_1$ to $Z_2$). \label{fig:purify}}
\end{figure}

At high level, the algorithm works as follows (see Figure~\ref{fig:purify} for a
graphical representation). The computation is organized in \emph{rounds}. In
each round, $\BO{\delta}$ input keys in $X$ and $Y$ are respectively pumped into
buffers $\X_1$ and $\Y_1$. Then, the algorithm merges these keys in $\Z_1$ by
iteratively merging small amounts of data in safe memory: during each
\emph{iteration}, chunks of $\BO{S}$ consecutive keys in $\X_1$ and $\Y_1$ are
moved into buffers $\X_2$ and $\Y_2$, where they are merged in $\Z_2$ using a
standard merging algorithm.  Keys in $\Z_2$ are shifted into $\Z_1$ at the  end
of each iteration, while keys in $\Z_1$ are appended to $Z$ at the end of each
round. The algorithm performs some checks, which are explained in details later,
when keys are moved among buffers in order to guarantee resiliency: an
\emph{inversion check} is done every time a key is shifted from $\X_1$ to $\X_2$
or from $\Y_1$ to $\Y_2$;  a \emph{safety check} is executed every time buffer
$\Z_1$ is appended to $\Z$. If a check is unsuccessful, some critical faults
have occurred and then part of the computation must be rolled back and
re-executed. 

We now provide a more detailed description. Each round starts by filling buffers
$\X_1$ and $\Y_1$ with the remaining keys in $X$ and $Y$, starting from those
occupying the smallest index positions (i.e., from the smallest faithful
values). Subsequently, the algorithm fills $\Z_1$ with at least $\delta$ values
from the sequence obtained by merging $\X_1$ and $\Y_1$ or until there are no
further keys to merge. Specifically, buffer $\Z_1$ is
filled by iterating the following steps until it contains at least $\delta$
values  or there are no further keys to merge in $\X_1$, $\X_2$, $\Y_1$ and
$\Y_2$:
\begin{enumerate}
\item Buffers $\X_2$ and $\Y_2$ are filled with the remaining keys of
$\X_1$ and $\Y_1$, respectively, starting from the smallest index position. With
the exception of the first iteration of the first round, an  inversion check is
executed for each key inserted in $\X_2$ and $\Y_2$. If a
check is unsuccessful, the current round is restarted. In the first iteration of
the first round, each key is inserted in $\X_2$ and $\Y_2$  without any check.

\item Buffers $\X_2$ and $\Y_2$ are merged in $\Z_2$, until buffer $\Z_2$ is
full or there are no further entires in $\X_2$ and $\Y_2$. 
The merging is performed using the standard algorithm since input and
output buffers are stored in safe memory.

\item Buffer $\Z_2$ is appended to $\Z_1$ and then emptied.
\end{enumerate}

As soon as $\Z_1$ is full or there are no further keys, a safety check is
performed on $\Z_1$: if it succeeds,  buffer $\Z_1$ is appended to $Z$ and
flushed and then a new round is started; otherwise, the current round is
restarted. 

The inversion check works as follows. The check is performed on every new key
$x$ of $\X_1$ inserted into $\X_2$, and on every new key $y$ of $\Y_1$ inserted
into $\Y_2$. 
We describe the check performed on each entry $x$, being the control executed
on $y$ defined correspondingly. 
If $\X_2$ is empty, no operation is done and the check ends successfully.
Otherwise, the value $x$ is compared with the last inserted key $x'$ in $\X_2$.
If $x$ is larger than $x'$, no further operations are done and the check ends
successfully. Otherwise, if $x$ is smaller than or equal to $x'$, it is possible
to conclude that at least one of the two keys is corrupted since $\X_2$ is
supposed to be faithfully ordered and each key to be unique. Then, both keys are
inserted into $\F$ and removed from $\X_1$ and $\X_2$; if there exists at least
one value in $\X_2$ after the removal, the check ends successfully, and it
ends unsuccessfully otherwise. We observe that inversion
checks guarantee $\X_2$ and $\Y_2$ to be perfectly ordered at any time (recall
that the two buffers are stored in safe memory).

The safety check works as follows. The check is performed when $\Z_1$
contains at least $\delta$ keys or there are no more keys to merge. In the last
case, the check always ends successfully. Suppose now that $\Z_1$ contains at
least $\delta$ keys, and let $z$ be the latest key inserted into $\Z_1$ which we
assume to be stored in safe memory. Denote with $\X'$ (resp., $\Y'$) the
concatenation of keys in $\X_2$ and $\X_1$ (resp., $\Y_2$ and $\Y_1$). If there
are less than $S/2$ keys in $\X'$ and $\Y'$  smaller than or equal to $z$, the
safety check ends successfully. Otherwise, the algorithm scans $\X'$ starting
from the smaller position and compares each pair of adjacent keys looking for
inversions: if a pair is not ordered, it is possible to conclude that at least
one of the two values has been corrupted, and hence both keys are  inserted in
$\F$ and removed from $\X'$.  A similar procedure is executed for $\Y'$ as well.
The check then ends unsuccessfully. 

When a round is restarted due to an unsuccessful check, 
the algorithm replaces keys in $\Z_i$, $\X_i$ and $\Y_i$, for any
$i\in \{1,2\}$, with the keys contained in the respective buffers at beginning
of the round (specifically, just after the algorithm terminates to fill
buffers $\X_1$ and $\Y_1$  with
new keys). However, keys
that have been moved to $F$ during the failed round are not restored
(empty positions in $\X_1$ and $\Y_1$ are suitably filled with keys in $X$
and $Y$). This operation can be implemented by storing a copy of $\X_1$, $\Y_1$
and $\Z_1$ in the faulty memory, and of $\X_2$, $\Y_2$ and $\Z_2$ in the safe
memory. For every key moved to $F$, the key is also removed from the
copies\footnote{An entry can be removed from a sequence in constant time by
moving the subsequent entries in the correct position as soon as they are read
and by maintaining the required pointers in the safe memory.}.

\begin{lemma}\label{lem:purify}
Let $X$ and $Y$ be two faithfully ordered sequences of length $n$.
\SPurifyingMerge{} returns a faithfully ordered sequence $Z$ of length $\vert Z
\vert \geq n- 2\alpha$ containing part of the merge of $X$ and $Y$, and a
sequence $F$ of length $\vert F \vert \leq 2\alpha$ containing the remaining
input keys. The algorithm runs in $\BO{n+\alpha\delta/S}$ time and uses
$\BT{S}$ safe memory words.
\end{lemma}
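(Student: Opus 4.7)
The plan is to split the argument into two parts: first, correctness of $Z$ and $F$, and second, the running-time bound. I would begin by bounding $|F|$, since this bound simultaneously yields the bound on $|Z|$ and underpins the running-time analysis. Every pair of keys deposited into $F$ comes either from an inversion check, when a new key is compared to the last key of $X_2$ or $Y_2$, or from the scan of $X'$ or $Y'$ triggered by a failed safety check; in both cases the two keys involved are consecutive in a sequence that would be faithfully ordered with distinct values in a fault-free execution. Hence at least one of the two has been corrupted by the adversary, and charging each deposited pair to a distinct fault yields $|F|\le 2\alpha$. Because every input key either reaches $Z$ or is moved to $F$, this gives $|Z|\ge 2n-|F|\ge 2n-2\alpha$, which is stronger than the bound stated in the lemma.

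For faithful ordering of $Z$ I would proceed by induction over the rounds, showing that immediately after each append of $Z_1$ to $Z$ the sequence $Z$ is faithfully ordered and all its faithful keys are no larger than any faithful key still pending in $X\cup X_1\cup X_2$ or $Y\cup Y_1\cup Y_2$. The induction relies on two invariants enforced by the algorithm. First, since $X_2$ and $Y_2$ live in safe memory and each newly inserted key must survive the inversion check, both buffers are faithfully ordered at all times, so the in-memory merge of step~2 on the faithful keys behaves like a standard merge. Second, a successful safety check guarantees that fewer than $S/2$ faithful keys smaller than the last key $z$ of $Z_1$ are still pending in $X'=X_2\cdot X_1$ or $Y'=Y_2\cdot Y_1$. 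The key correctness step is to convert this quantitative bound into the structural claim that $Z_1$ is faithfully ordered: an out-of-order faithful pair in $Z_1$ would require a corrupted key to have beaten a smaller faithful key in a comparison while that faithful key was still delayed in $X'\cup Y'$, and the $S/2$ threshold is designed to preclude any accumulation of such events from passing undetected.

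The running-time analysis charges $O(n)$ to the productive work: each of the $O(n/\delta)$ successful rounds pumps $O(\delta)$ keys through the cascade, performs $O(\delta/S)$ inner iterations of $O(S)$ work each, and spends $O(\delta)$ on its single safety check, for a per-round cost of $O(\delta)$. The remaining cost comes from re-executed rounds, and the heart of the analysis is to bound the number $r$ of failed rounds by $O(\alpha/S)$, so that $r\cdot O(\delta)=O(\alpha\delta/S)$. The plan is to show that every failed round forces the adversary to deposit $\Omega(S)$ distinct keys into $F$: for a failed safety check, the $\ge S/2$ misplaced keys witnessed by the check translate into $\Omega(S)$ adjacent inversions harvested by the subsequent scan of $X'$ and $Y'$; for a failed inversion check, one must argue that the evictions which shrank $X_2$ (or $Y_2$) from its post-refill size down to a single element during the current round already account for $\Omega(S)$ corrupted keys moved to $F$. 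Combining the two cases gives $r\cdot\Omega(S)\le |F|\le 2\alpha$, hence $r=O(\alpha/S)$ and a total cost of $O(n+\alpha\delta/S)$; the safe-memory usage $\Theta(S)$ follows immediately from the declared sizes of $X_2$, $Y_2$, and $Z_2$. The main obstacle I foresee is the inversion-check case of the charging argument: since in isolation a single corruption can create one inversion, one has to exploit the refill-to-capacity behavior of step~1, together with the restart rule that preserves $F$ while restoring every other buffer, to rule out chains of cheap restarts.
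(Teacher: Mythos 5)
Your plan matches the paper's proof at every stage: you bound $|F|$ by charging each deposited pair to a distinct fault (a pair is always physically removed, so the charged faults are disjoint), you observe $|Z|=2n-|F|\ge 2n-2\alpha$ (and are right that the stated $n-2\alpha$ is the weaker bound), and you bound the running time by charging each failed round to $\Omega(S)$ keys deposited into $F$, which is equivalent to the paper's charging of $S/2$ faults per failed round. One point to correct in your correctness sketch: the faithful ordering of $\Z_1$ \emph{within} a successful round is not what the $S/2$ safety-check threshold is for, and trying to derive it from that bound would not go through. Within a round it follows purely structurally: inversion checks keep $\X_2$ and $\Y_2$ perfectly sorted in safe memory, and since a successful round never empties them, the concatenation $\hat X$ (resp.\ $\hat Y$) of all keys passed through $\X_2$ (resp.\ $\Y_2$) over the round's iterations is globally ordered, so the safe-memory merges into $\Z_2$, and hence $\Z_1$, are exact on faithful keys. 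The $S/2$ threshold is instead what guarantees the \emph{cross-round} order of $Z$: a successful safety check leaves at most $S/2$ keys $\le z$ in $\X'$; since of the $4\delta+S$ keys initially in $\X_1$ at most $2\delta$ can have gone to $F$ and at most $\delta+S/2-1$ to $\Z_1$, at least $\delta+1$ keys $>z$ remain in $\X'$, one of which must be faithful, so every faithful key still in $X$ exceeds $z$; and the $\le S/2$ small keys stuck in $\X_1$ are filtered by inversion checks in later rounds because $\X_2$ retains $\ge S/2$ keys $>z$. Also a small slip in the inversion-check charging: the restart triggers when $\X_2$ is emptied (size $0$ after a removal), not when it reaches one element, and the relevant starting size is $|\X_2|\ge S/2$ at the end of the previous iteration rather than the post-refill size $S$; the $\Omega(S)$ eviction count you need holds either way.
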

\begin{proof}
It is easy to see that the algorithm uses $\BT{S}$ safe memory and that each
input key must be in $Z$ or $F$. We prove that $Z$ is faithfully ordered as
follows: we first show that $\Z_1$ is faithfully ordered at the end of each
round; we then argue that $\Z_1$ can be appended to $Z$ without affecting the
faithful order of $Z$. We say that a round is successful if the round is not
restarted by an unsuccessful  inversion or safety check. For proving the
correctness of the algorithm we focus on successful rounds since unsuccessful
ones do not affect $Z$. 

Let us now show that buffer $\Z_1$ contains a faithfully ordered sequence at the
end of a successful round. Inversion checks guarantee that any key inserted in
$\X_2$ is not smaller than the previous one, and then buffer $\X_2$ is sorted at
any time. Moreover, since the round is successful, buffer $\X_2$ always contains
at least one key during the round and, in particular, the buffer contains a key
between two consecutive iterations. This fact guarantees that the concatenation
$\hat X$ of \emph{all} keys inserted in $\X_2$ in each iteration of the round
creates an ordered sequence. Similarly, we have that the concatenation $\hat Y$
of \emph{all} keys inserted in $\Y_2$ in each iteration of the round is ordered.
Each iteration of the round merges in safe memory a part of $\hat X$ and $\hat
Y$. Then. the concatenation of all keys written into $\Z_2$ during the round is
the correct merge of $\hat X$ and $\hat Y$ (note that the largest keys in $\hat
X$ and $\hat Y$  are not merged and are kept in $\X_2$ and $\Y_2$).  Since these
output keys are first stored in $\Z_2$ and then in $\Z_1$, we can claim that
$\Z_1$ is a faithfully ordered sequence: indeed, there can be an out-of-order
key in $\Z_1$ due to a corruption occurred after the key has been moved from
$Z_2$ to $\Z_1$.

We now prove that $Z$ is faithfully ordered. If the algorithm ends in one
successful round, then $Z$ is faithful ordered by the previous claim. We now
suppose that there are at least two successful rounds. Let $\Z^i_1$ be the two
buffers appended to $Z$ at the end of the $i$-th and $(i+1)$-st successful
round. We have that $\Z^i_1$ and $\Z^{i+1}_1$ are faithfully ordered by the
previous claim, and we now argue that even the concatenation of $\Z_1^i$ and
$\Z_1^{i+1}$ is faithfully ordered. Let $z$ be  the latest value $z$ inserted in
$\Z^i_1$: since $z$ is maintained in safe memory, we have that $z$ is larger
than all the faithful values in $\Z^i_1$ and smaller than all values in $\X_2$
and $\Y_2$. Since the round is successful, there are at most $S/2$ keys smaller
than or equal to $z$ in $\X'$ and hence at least $\delta+1$ keys larger than $z$
in $\X'$: indeed, the $4\delta+S$ entries in $\X_1$ at the beginning of the
round can be moved in $F$ (at most $2\delta$), in $\Z_1$ (at most
$\delta+S/2-1$), or in $\X_2$ (thus remaining in $\X'$).  Therefore, there
exists a faithful key in $\X'$ larger than $z$, and hence all the faithful
values remaining in $\X$ must be larger than $z$. The at most $S/2$ keys smaller
than or equal to $z$ must be in $\X_1$, and they will be removed by inversion
checks in subsequent rounds since there are at least $S/2$ values in $\X_2$
larger than $z$ (note that there are always at least $S/2$ keys $\X_2$ at the
end of an iteration). Since a similar claim applies to $Y$, we have that all
faithful keys in $\Z^{i+1}_1$ are larger than those in $\Z^i_1$. It follows that
$Z$, which is  the concatenation of $\Z^1_1, \Z^2_1,\ldots$, is faithfully
ordered.

Finally, we upper bound the running time of the algorithm.
If no corruption occurs, the algorithm requires $\BO{n}$ time since there are
$\BO{n/\delta}$ rounds, each one requiring $\BO{\delta/S}$ iterations of cost
$\BO{S}$. Faults injected by the adversary during the first iteration of
the first round cannot restart the round, but increase the running time by at
most a factor $\BO{\alpha}$ since each fault can cause two keys to be moved in
buffer $F$. Consider an unsuccessful round that fails due to an inversion check.
Since at the beginning of each iteration there are at least $S/2$ keys in both
$\X_2$ and $\Y_2$, then at least $S/2$ inverted pairs are required for emptying 
$\X_2$ or $\Y_2$ and the adversary must pay $S/2$ faults for getting an
unsuccessful inversion check. Consider now an unsuccessful round that fails due
to a safety check.  Since there are at leas $S/2$ keys larger than or equal to
$z$ in $\X_2$ and $\Y_2$ and there are at least $S/2$ keys smaller than $z$ in
$\X_1$ and $\Y_1$, there must be at least $S/2$ inversions. Since at least
$S/2$ of these inversions are removed during the safely check and cannot
be used by the adversary for failing another round,  the adversary
must pay  $S/2$ faults for getting an
unsuccessful safety check. In all cases, each unsuccessful round costs
$\BO{\delta}$ time to the algorithm and $S/2$ faults to the adversary: since
there cannot be more than $\lfloor 2\alpha /S\rfloor$ unsuccessful rounds, the
overhead due to unsuccessful rounds is $\BO{\alpha\delta/S}$. The lemma follows.
\end{proof}

\subsection[SBucketSort Algorithm]{\SBucketSort{} Algorithm}\label{sec:bucket}
Let $X$ be a faithfully ordered sequence of length $n_1$ and $Y$ an arbitrary
sequence of length $n_2$. The \SBucketSort{} algorithm computes a faithfully
ordered sequence containing all keys in $X$ and $Y$ in
$\BO{n_1+(n_2+\delta)n_2/S + (n_2+\alpha) \log S}$ time using a safe memory of
size $\BT{S}$. This algorithm extends and fuses the \NaiveSort{} and
\UnbalancedMerge{} algorithms presented in~\cite{FGI09}. 

The algorithm consists of $\lceil n_2/S \rceil$ rounds. At the beginning of each
round, the algorithm removes the $S$ smallest keys among those remaining in $Y$
and stores them into an ordered sequence $P$ maintained in safe memory.
Subsequently, the algorithm scans the remaining keys of $X$, starting from the
smallest position, and partitions them among $S+1$ buckets $\B_i$ where $\B_0$
contains keys in $(-\infty, P[1])$, $\B_i$ keys in $[P[i],P[i+1])$ for $1\leq i<
S$ and $\B_S$ keys in $[P[S],+\infty)$. The scan of $X$ ends  once $\delta+1$
keys have been inserted into $\B_S$ or there are no more keys left in $X$. For
each key $x$ in $X$, the search of the bucket is crucial for improving
performance and proceeds as follows: the algorithm checks if $x$ belongs to the
range of the last used bucket $\B_k$, for some $0\leq k\leq S$; if the check
fails, then the algorithm verifies if $x$ belongs to the right/left $\log S$
adjacent buckets of $\B_k$; if this search is again unsuccessful, the correct
buffer is identified by performing a binary search on $P$. When the correct
bucket is found, entry $x$ is removed from $X$ and appended to the sequence of
keys already in the bucket, thus guaranteeing that each bucket is faithfully
ordered. When the scan of $X$ ends, the sequence given by the concatenation of
$\B_0, P[1], \B_1, P[2],\ldots, \B_{S-1}, P[S]$ is appended to the output
sequence $Z$, keys in $\B_S$ are inserted again in $X$ (in suitable empty
positions of $X$ that maintain the faithful order of $X$), $P$ and the $S+1$
buckets $\B_i$ are emptied and a new round is started. After the  $\lceil n_2/S
\rceil$ rounds, all remaining keys in $X$ are appended to $Z$.

The $S$ smallest values of $Y$, which are used in each round for determining the
buckets, are extracted from $Y$ using the following approach. At the beginning
of \SBucketSort{} (i.e., before the first round), a support priority queue
containing $S$ nodes is constructed in safe memory as follows. Keys in $Y$ are
partitioned into $S$ segments of size $\lceil n_2 /S \rceil$ and a node
containing the smallest key and a pointer to the segment is inserted
into the priority queue (the segment is stored in the faulty memory). Each time
the smallest value is required, the smallest value $y$ of the queue is
extracted (we note that $y$ is the minimum faithful key among those in $Y$ or a
corrupted key even smaller). Then, $y$ is removed from the queue and from the
respective buffer, and a new node with the new smallest key and a pointer to the
segment is inserted in the queue. When a value is removed from a segment, the
remaining values in the segment are shifted to keep keys stored in
consecutive positions. At the beginning of each round, the $S$ values
used for defining the buckets are obtained by $S$ subsequent extractions of the
minimum value in the priority queue and maintained in $S$ safe memory words.

\begin{lemma}\label{lem:bucket}
Let $X$ be a faithfully ordered sequence of length $n_1$ and let $Y$ be a
sequence of length $n_2$. \SBucketSort{} returns a faithfully ordered sequence
containing the merge of keys in $X$ and $Y$. The algorithm runs in
$\BO{n_1+(n_2+\delta)n_2/S + (n_2+\alpha) \log S}$ time and uses  $\BT{S}$ safe
memory words.
\end{lemma}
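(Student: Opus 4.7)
My plan is to prove the three assertions of the lemma separately: that the output is a faithfully ordered sequence containing every input key, that the running time matches the stated bound, and that only $\BT{S}$ safe memory words are used. I would begin by pinning down the correctness invariant round by round. Because the pivot array $P$ and every node of the support priority queue reside in safe memory, I can treat the $S$ pivots of any round as reliable boundaries, and argue that successive rounds produce pivots that are faithfully non-decreasing. Within a round, $X$ is faithfully ordered and scanned left-to-right, so the target-bucket index of each faithful key is monotone non-decreasing, from which the concatenation $\B_0,P[1],\B_1,\ldots,P[S]$ appended to $Z$ is faithfully ordered. For the early-stop rule I would note that once $\delta+1$ keys have been placed into $\B_S$ at least one of them is faithful, so every remaining faithful key of $X$ is at least $P[S]$ and can be safely deferred to the next round via re-insertion into the (empty positions of the) faithfully ordered $X$. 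Chaining these per-round invariants yields the faithful ordering of the final $Z$.

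For the running time I would decompose the cost into three contributions. First, the priority queue on $Y$ performs $n_2$ extract-mins, each costing $\BO{\log S}$ inside the $S$-node heap plus an $\BO{n_2/S}$ shift within the underlying segment of $Y$, contributing $\BO{n_2 \log S + n_2^2/S}$. Second, the scans of $X$ across all rounds touch at most $n_1 + \BO{\delta n_2/S}$ keys, since every round inserts at most $\delta+1$ keys into $\B_S$ that are recycled into $X$ for the next round; each such touch costs $\BO{1}$ beyond the bucket-search charge. Third, the per-key bucket searches account for the remaining $\BO{(n_2+\alpha)\log S}$ term.

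The main obstacle, and the step I expect to require the most care, is the amortized analysis of the three-stage bucket search (probe the cached last-used bucket, then the $\log S$ adjacent buckets, then a binary search on $P$). My plan is to charge $\BO{\log S}$ only at transitions of the cache: monotonicity of the faithful target index bounds the number of faithful transitions by $\BO{S}$ per round and hence $\BO{n_2}$ across the $\BO{n_2/S}$ rounds, while each corrupted key can force at most two spurious transitions (one for itself and one for the first faithful key placed after it), adding only $\BO{\alpha \log S}$. Summing all three contributions produces the stated bound $\BO{n_1+(n_2+\delta)n_2/S+(n_2+\alpha)\log S}$.

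Finally, for the safe-memory accounting I would enumerate what must reside in safe memory during a round: the pivot array $P$, the $S$-node priority-queue heap, the bucket-count array of size $S+1$, and the cached last-used-bucket index, all fitting in $\BT{S}$ words, while the buckets, the residual $X$, and the segments of $Y$ are kept in faulty memory.
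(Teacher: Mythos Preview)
Your proposal is correct and follows the paper's proof almost exactly: the correctness argument (safe pivots, per-round faithful ordering via monotone bucket indices, the $\delta+1$ early-stop to chain rounds) and the cost decomposition (priority-queue extractions giving $\BO{n_2^2/S+n_2\log S}$, recycled keys giving $\BO{\delta n_2/S}$, and two $\BO{\log S}$ charges per corruption) are the same. The one notable variation is in the bucket-search amortization: the paper charges the search for $X[i]$ an amount $\BO{1+f_i}$, where $f_i$ is the number of $Y$-keys in the interval $(X[i-1],X[i])$, and since each key of $Y$ contributes to a single $f_i$ this sums to $\BO{n_1+n_2}$ in the no-fault case; your transition-count argument (at most $S$ monotone bucket changes per round, hence $\BO{n_2}$ total, each charged $\BO{\log S}$) is slightly coarser, giving $\BO{n_1+n_2\log S}$, but it is simpler and still lands inside the stated bound because the $(n_2+\alpha)\log S$ term is already present.
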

\begin{proof}
It is easy to see that the algorithm uses $\BT{S}$ safe memory and that each key
of $X$ and $Y$  must be in the output sequence at the end of the algorithm. We
now argue that the output sequence $Z$ is faithfully ordered at the end of the
algorithm: we first prove that the sequence appended to $Z$ at the end of each
round is faithfully ordered, and then  show that it can be appended to $Z$
without affecting the faithful order of $Z$. 
We now prove that the sequence
appended to $Z$ at the end of a round  is faithfully ordered. For each $0\leq i
\leq S$, each faithful key appended to bucket $\B_i$ is in the correct range
(note that $P$ cannot be corrupted being in safe memory) and the sequence of
keys in $\B_i$ is faithfully ordered since the insertion maintains the order  of
faithful keys in $X$. Therefore, the sequence $\B_0, P[1], \B_1, P[2],\ldots,
\B_{S-1}, P[S]$ that is appended to $Z$ is faithfully ordered. We now show that
the appended sequence guarantees the faithful order of $Z$ by proving that, at
the end of a round, the  faithful keys that remain in $X$ are larger than
those already in $Z$ (i.e., faithful keys that are appended in subsequent rounds
are larger). If the round ends since there are no further keys in $X$, the
claim is trivially true. Suppose now that the round ends since there are
$\delta+1$ keys in $\B_S$. Then, there must exist at least one faithful key
larger than $P[S]$ in $\B_S$ and all remaining faithful values in $X$ must be
larger than $P[S]$. Therefore, we can conclude that $Z$ is faithfully
ordered at the end of the algorithm: by the above argument $Z$ is faithfully
ordered at the end of the last round; the keys in $X$ that are appended to $Z$
after the last round do not affect the faithful order of $Z$ since  $X$ is
faithfully ordered and faithful keys in $X$ are larger than those already in
$Z$.

We now upper bound the running time. Suppose no corruptions occur during the
execution of the algorithm. Extracting the $S$ smallest keys from $Y$ using the
auxiliary priority queue requires $\BO{n_2+S\log S}$ time ($\BO{n_2 \log n_2}$
time if $n_2<S$) for each of the $\lceil n_2/S\rceil$ rounds, and therefore a
total $\BO{n_2 (n_2/S +\log S)}$ time. The insertion of an entry $X[i]$ in a
bucket, for
each $1\leq i\leq n_1$, requires $\BO{1+\min\{f_i, \log S\}}=\BO{1+f_i}$, where
$f_i$ is the number of keys of $Y$ in the range $(X[i-1], X[i])$: indeed, the
algorithm searches the bucket for $X[i]$ among the $\BO{\min\{f_i, \log S\}}$
buckets around
the
one containing $X[i-1]$ and then, only in case of failure, performs a binary
search on $P$. When no fault occurs, each key of $Y$ contributes to one of the
$f_i$'s  since $X$ is sorted. Therefore,  the partitioning of $X$ costs
$\BO{\sum_{i=1}^{n_1} (1+f_i)} = \BO{n_1 + n_2}$. We note that the above
analysis ignores the fact that in each round $\delta+1$ values are moved back
from $\B_{S}$ to $X$ this fact
leads to an overall increase of the running time given by an additive component
$\BO{\delta \lceil n_2/S\rceil}$, which follows by charging $\BO{\delta}$
additional operations to each round. A fault in $X$ may affect the running time
required for partitioning $X$. In particular, each fault may force the algorithm
to pay $\BO{\log S}$ for the corrupted key and the subsequent one in $X$:
indeed, a corruption of $X[i]$ may force the algorithm to perform a binary
search in order to find the right bucket for $X[i]$ and for the subsequent
key $X[i+1]$. The additive cost due to $\alpha$ faults is hence $\BO{\alpha \log
S}$. The corruption of keys in $Y$ does not affect the running time since the
algorithm does not exploit the ordering of $Y$. The lemma follows.
\end{proof}

\subsection[SMerge and SSort Algorithms]{\SMerge{} and \SSort{}
Algorithms}\label{sec:merge} As previously described, \SMerge{} processes the
two input sequences with \SPurifyingMerge{} and then the two output sequences
are merged with \SBucketSort{}. We get the following lemma.
\begin{lemma}\label{lem:merge}
Let $X$ and $Y$ be two faithfully ordered sequences of length $n$. Algorithm
\SMerge{} faithfully merges the two sequences in $\BO{n + \alpha \left(\delta/S
+ \log S\right)}$ time  using $\BT{S}$ safe
memory words. 
\end{lemma}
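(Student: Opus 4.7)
The plan is to combine the two subroutines established in Lemmas~\ref{lem:purify} and~\ref{lem:bucket} in the natural pipeline suggested at the beginning of Section~\ref{sec:sort}. First I would invoke \SPurifyingMerge{} on $X$ and $Y$ to obtain a faithfully ordered sequence $Z$ containing a partial merge together with a residual sequence $F$ of length at most $2\alpha$. Then I would feed these two sequences into \SBucketSort{}, with $Z$ playing the role of the faithfully ordered input of length $n_1 \leq 2n$ and $F$ playing the role of the arbitrary sequence of length $n_2 \leq 2\alpha$. The result is a single faithfully ordered sequence over all $2n$ keys of $X\cup Y$, which is exactly the required resilient merge.

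Correctness is essentially a chaining argument with matched interfaces. Lemma~\ref{lem:purify} guarantees that $Z$ is faithfully ordered and that together $Z$ and $F$ contain every input key of $X$ and $Y$. Lemma~\ref{lem:bucket} then guarantees that the output of \SBucketSort{} is faithfully ordered and contains every key in $Z\cup F$, hence every faithful input key. No additional structural reasoning is needed beyond noting that \SPurifyingMerge{} hands \SBucketSort{} exactly the kind of input it expects. For safe memory, both subroutines already use $\BT{S}$ words, and only $\BO{1}$ additional reliably written pointers are needed to pass control between them.

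For the running time, I would simply add the two bounds. \SPurifyingMerge{} costs $\BO{n + \alpha\delta/S}$ by Lemma~\ref{lem:purify}. Substituting $n_1 \leq 2n$ and $n_2 \leq 2\alpha$ into Lemma~\ref{lem:bucket} gives
\[
\BO{n_1 + (n_2+\delta)n_2/S + (n_2+\alpha)\log S} \;=\; \BO{n + (\alpha+\delta)\alpha/S + \alpha \log S}.
\]
Using $\alpha\leq\delta$ to absorb $\alpha^2/S$ into $\alpha\delta/S$, this simplifies to $\BO{n + \alpha(\delta/S + \log S)}$. Adding the two contributions yields the claimed $\BO{n + \alpha(\delta/S + \log S)}$ bound.

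The main obstacle I anticipate is not conceptual but quantitative: one must be careful to observe that $|Z|$ can be as large as $2n$ (since \SPurifyingMerge{} only redistributes, never discards, input keys) so that the additive $n_1$ term does not accidentally become super-linear, and one must exploit $\alpha\leq\delta$ to collapse the $\alpha^2/S$ cross term. Once these arithmetic observations are in place, the lemma follows directly from the two preceding lemmas with no further case analysis.
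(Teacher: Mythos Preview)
Your proposal is correct and follows essentially the same approach as the paper: invoke \SPurifyingMerge{} to get $Z$ and $F$, then feed them into \SBucketSort{}, and sum the two running-time bounds from Lemmas~\ref{lem:purify} and~\ref{lem:bucket}. Your write-up is in fact more explicit than the paper's about the arithmetic (using $\alpha\le\delta$ to absorb $\alpha^2/S$ into $\alpha\delta/S$), but the underlying argument is identical.
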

\begin{proof}
By Lemma~\ref{lem:purify}, algorithm \SPurifyingMerge\ returns a faithful
sequence $Z$ of length at most $2n$ and a
sequence $F$ of length at most $2\alpha$ in $\BO{n+\alpha \delta /S}$ time.
These output sequences are then combined using the \SBucketSort{} algorithm:
by Lemma~\ref{lem:bucket}, this algorithm  returns a faithfully ordered
sequence of all the input elements in $\BO{n+\alpha \left(\delta /S +\log
S\right)}$ time. The lemma follows.
\end{proof}

By using \SMerge{} in the classical mergesort algorithm\footnote{The standard
recursive mergesort algorithm requires a stack of length $\BO{\log n}$ which
cannot be corrupted. However, it is easy to derive an iterative algorithm where
a $\BT{1}$ stack length suffices.}, we get the desired resilient sorting
algorithm \SSort{} and the following theorem.
\begin{theorem}
Let $X$ be a sequence of length $n$. Algorithm \SSort{} faithfully sorts the
keys in $X$ in $\BO{n \log n+  \alpha  \left(\delta/S + \log S\right)}$ time
using $\BT{S}$ safe memory words.
\end{theorem}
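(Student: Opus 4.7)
The plan is to view \SSort{} as standard bottom-up (iterative) mergesort built on top of \SMerge{}, and then combine Lemma~\ref{lem:merge} with a level-by-level accounting of both the running time and the faults. Since the footnote already explains that iterative mergesort only needs $\BT{1}$ stack words (plus the $\BT{S}$ words used inside each invocation of \SMerge), the safe-memory bound of $\BT{S}$ follows immediately: only one call to \SMerge is active at a time, and its scratch buffers $X_2,Y_2,Z_2$ and its reliably-stored control variables share the same $\BT{S}$ safe region across successive invocations.

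For correctness, I would proceed by induction on the level $\ell$ of the mergesort, with the invariant that after level $\ell$ every block of length $2^{\ell}$ in the working array is a faithfully ordered sequence. The base case $\ell=0$ is trivial since singleton sequences are faithfully ordered. For the inductive step, Lemma~\ref{lem:merge} guarantees that merging two adjacent faithfully ordered blocks produces a faithfully ordered block containing all the same keys (apart from ones the adversary has corrupted, which are allowed to be arbitrarily placed). After $\lceil \log_2 n\rceil$ levels the whole array is faithfully ordered, which is exactly the resilient-sorting guarantee.

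For the running-time bound I would write the standard mergesort recurrence but with the per-merge cost coming from Lemma~\ref{lem:merge}. Let $\alpha_j$ denote the number of faults injected by the adversary while the $j$-th merge is being executed; then that merge costs $\BO{n_j + \alpha_j(\delta/S+\log S)}$ time, where $n_j$ is the length of its inputs. Summing the $n_j$ term across all merges at a given level gives $\BO{n}$, and summing over the $\BO{\log n}$ levels gives $\BO{n\log n}$. Each injected fault occurs during at most one merge invocation, so $\sum_j \alpha_j \leq \alpha$, and the fault-dependent terms telescope to $\BO{\alpha(\delta/S+\log S)}$. Adding the two contributions yields the claimed $\BO{n\log n + \alpha(\delta/S+\log S)}$ bound.

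The only subtle step is the last one: one has to argue that faults can be attributed to a unique merge so that $\sum_j \alpha_j \leq \alpha$ holds without double-counting. I would justify this by observing that at any moment of the execution exactly one \SMerge{} invocation is active (by the iterative structure), and any fault that is charged for Lemma~\ref{lem:merge}'s overhead must have occurred while that merge was running or must corrupt data that merge later reads; assigning each fault to the merge that pays for it gives a disjoint charging. Everything else in the proof is routine arithmetic on top of Lemma~\ref{lem:merge}, so this attribution argument is the main, and essentially only, thing that needs care.
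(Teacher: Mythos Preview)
Your approach matches the paper's---iterative mergesort, apply Lemma~\ref{lem:merge} per call, sum the linear parts to $\BO{n\log n}$ and the fault terms separately---and the correctness and safe-memory arguments are fine. The slip is exactly where you flag the subtlety. You set $\alpha_j$ to be the faults injected \emph{while merge $j$ runs} and then invoke Lemma~\ref{lem:merge} to bound merge $j$'s cost by $\BO{n_j + \alpha_j(\delta/S+\log S)}$. But a level-$(i{+}1)$ merge reads the \emph{output} of a level-$i$ merge, and that output may contain a key $x$ corrupted during level~$i$ \emph{after} \SBucketSort{} had already placed it; such an $x$ is misordered in the level-$(i{+}1)$ input, \SPurifyingMerge{} sweeps it (and a neighbour) into $F$, and the level-$(i{+}1)$ call pays $\BT{\delta/S+\log S}$ for it even though no fault occurred while that call was active. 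The same fault may also have cost the level-$i$ call an $\BO{\log S}$ term inside \SBucketSort{}. So one fault can be charged to two consecutive merges, and your ``disjoint charging'' claim fails.

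The paper handles this by letting $\alpha_{i,j}$ count the faults \emph{detected} by merge $(i,j)$ and observing that a corruption at level~$i$ can surface at level~$i$ or level~$i{+}1$ but never at level~$i{+}2$: \SBucketSort{} at level~$i{+}1$ re-inserts the corrupted key at the position dictated by its current value, so from level~$i{+}2$ onward it behaves as faithful. This yields $\sum_{i,j}\alpha_{i,j}\le 2\alpha$, which still gives the claimed asymptotic bound. Adding this one-level-propagation argument would complete your proof.
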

\begin{proof}
Let us assume, for the sake of simplicity, $n$ to be a power of two, and denote
with $\alpha_{i,j}$ the number of faults that are detected by \SMerge\ on the
$j$-th recursive problem which operates on input sequences of length $2^i$, with
$0\leq i < \log n$ and $0\leq j < n/2^i$.
A fault injected in one sub-problem at level $i$ may affect the parent
problem at level $i+1$, but cannot affect sub-problems at level $i+2$. Indeed,
a key $x$ corrupted during the sub-problem at level $i$ may be out-of-order in
the output sequence. Key $x$ is then recognized by the \SMerge{} at
level $i+1$ as a fault, inserted in $F$ by \SPurifyingMerge{}, and then
positioned in the correct order in the output sequence by
\SBucketSort{} ($x$ will thus be consider as a faithful key in the parent
problem at level $i+2$). Another fault might cause key $x$ to be stored
out-of-order
again in the output sequence at level $i+1$, but this fact is accounted to the
new fault. Hence, we get $\sum_{i=0}^{\log n-1} \sum_{j=0}^{2^i-1} \alpha_i
\leq 2\alpha$. By the upper bound on the time of \SMerge\
in Lemma~\ref{lem:merge}, we get that the running time of \SSort\ is
upper bounded by 
$$ \BO{\sum_{i=0}^{\log n-1} \sum_{j=0}^{2^i-1}
\left(n/2^i+\alpha_{i,j} \left(\delta /S + \log S \right)\right)}.$$ 
The correctness of \SSort\ follows by the correctness of \SMerge.
\end{proof}

\section{Resilient Priority Queue}\label{sec:pqueue}
A \emph{resilient priority queue} is a data structure which 
maintains a set of keys that can be managed and accessed through two main 
operations: \Insert{}, which allows to add a key to the queue; and \Deletemin{}, 
which returns the minimum faithful key among those in the priority queue or
an even smaller corrupted key and then removes it from the priority queue. 

In this section we present an implementation of the resilient priority queue
that exploits a safe memory of size $\BT{S}$. Let $n$ denote the number of keys
in the queue. Our implementation requires $\BO{\log n + \delta / S}$ amortized
time per operation, $\BT{S}$ words in the safe memory and $\BT{n}$ words in the
faulty memory. 
Our resilient priority queue  is based
on the fault tolerant priority queue proposed in~\cite{JMM07}, which is in turn
inspired by the cache-oblivious priority queue in~\cite{ArgeBDHM07}. The
performance of the resilient priority queue is here improved by exploiting the
safe memory and the \SMerge{} and \SSort{} algorithms, in place of the resilient
merging and sorting algorithms in~\cite{FGI09}. It is important to point out
that the $\BOM{\log n + \delta}$ lower bound in~\cite{JMM07} on the performance
of the resilient priority queue does not apply to our data structure since the
argument assumes that keys are not stored in safe memory between operations. 
The amortized time of each operation in our implementation matches the
performance of classical
optimal priority queues in the RAM model when the number of tolerated
corruptions  is $\delta=\BO{S \log n}$: this represents a 
$\BT{S}$ improvement with respect to the state of the art
\cite{JMM07}, where optimality is reached for $\delta=\BO{\log n}$.

The presentation is organized as follows: we first present in
Section~\ref{sec:struct} the details of the priority queue implementation, with
particular emphasis on the role played by the safe memory; then we proceed in
Section~\ref{sec:corr} to prove its correctness and complexity bounds.

\subsection{Structure}\label{sec:struct}
The structure of our resilient priority queue is similar to the one used
in~\cite{JMM07}, however we require some auxiliary structures and different
constraints in order to exploit the safe
memory. Specifically, the resilient priority queue presented in this paper
contains the following structures (see Figure~\ref{fig:pqueue} for a graphical
representation):

\begin{figure}
  \centering
\resizebox{.95\textwidth}{!}{
\input{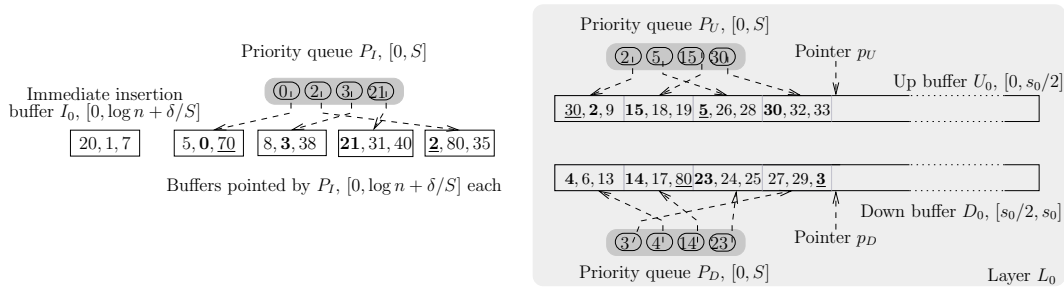}
}
  \caption{\label{fig:pqueue}Main support structures of the resilient priority
queue (layers $L_1, \ldots, L_{k-1}$ are omitted). The figure shows the
immediate insertion buffer $I_0$, the priority queue $P_I$ and the respective
pointed buffers, and layer $L_0$. Layer $L_0$ consists of the up buffer $U_0$,
the down buffer $D_0$, the priority queues $P_U$ and $P_D$, and the pointers
$p_U$ and $p_D$. The first $\delta+1$ entries of $U_0$ and of $D_0$ are
organized in up to $S$ sub-buffers of maximum size $\delta/S+1$. Each
sub-buffer of $U_0$ (resp., $D_0$) is pointed by a node in $P_U$ (resp., $P_D$).
All priority queues are stored in the safe memory, while buffers are
contained in the faulty memory. 
Underlined keys are corrupted, while bold keys are used as priority in some
queue.
Each range $[x,y]$ gives the minimum and maximum
number of contained keys in a buffer or nodes in a priority queue.
}
\end{figure}

\begin{itemize}
 \item The \emph{immediate insertion buffer} $I_{0}$,
which contains up to $\log n + \delta / S$ keys. This buffer is stored in the
faulty memory.
 
 \item The \emph{priority queue $P_{I}$}, which contains up to $S$ nodes. Each
node contains a pointer to a buffer of size at most $\log n + \delta/S$ and
the priority key of
the node is the smallest value in the pointed buffer. Buffers are
stored in the faulty memory, while the actual priority queue $P_I$ and other
structural information (e.g., buffer length, the position in the buffer of the
smallest key) are stored in $\BO{S}$ safe
memory
 words. The purpose of $P_{I}$ is to act as a buffer between the newly inserted
data in $I_0$ and the main structure of the priority queue, that is layers
$L_{0},\ldots,L_{k-1}$ (see below). On the one hand it allows to rapidly  access
the newly inserted keys, while on the other hand it accumulates such keys so
that the computational cost necessary for inserting all these keys in the main
structure is amortized over the insertion of at least $S\log n + \delta$ new
values.  
 
 \item The \emph{layers $L_{0},\ldots,L_{k-1}$}, with  $k = \BO{\log n}$. Each
layer $L_i$ contains two faithfully ordered buffers $U_{i}$  and $D_{i}$,
 named \emph{up buffer} and \emph{down buffer}, respectively. Up and
down buffers
are connected by a doubly linked list: for each $0\leq i <k$, buffer $U_i$ is
linked to $D_{i-1}$ and $D_i$ and vice versa. The layers are stored in the
faulty memory, while the size and the links to the neighbors of each buffer are
reliably written (i.e., replicated $2\delta+1$ times) in the faulty memory
using additional $\BT{\delta}$ space. For each layer, we define a threshold
value $s_{i}=2^{i+1}\left(S \log^{2} n + \delta \left( \log S + \delta/ S
\right)
\right)$ which is used to determine whether an up buffer $U_{i}$ has too many
keys or a down buffer $D_{i}$ has too few. Specifically, we impose the
following order and size invariants on all up and down buffers at any time:
\begin{itemize}
 \item (I1)  All buffers are faithfully ordered; 
 \item (I2)  For each $0 \leq i < k-1$, the concatenations $D_{i}D_{i+1}$ and
$D_{i}U_{i+1}$ are faithfully ordered;
 \item (I3) For each $0 \leq i < k-1$, $s_{i}/2 \leq \vert D_{i} \vert \leq
s_{i}$ (this invariant may not hold for the last layer); 
\item (I4) For each $0 \leq i < k$, $\vert U_{i} \vert \leq s_{i}/2$.
\end{itemize}

\item The \emph{priority queues $P_{U}$} and \emph{$P_{D}$} and the pointers
\emph{$p_U$} and \emph{$p_D$}, which are stored in the safe memory. These queues
are used to speed up the access to
entries in $U_0$ and $D_0$. We consider the buffer $U_{0}$ as the concatenation
of two buffers $U^{P}_{0}$ and $U^{S}_{0}$. $U^{S}_{0}$ contains  keys in the
$\delta +1$ smallest positions (if any) of $U_0$, while $U^{P}_{0}$ contains all
the remaining keys (if any) in $U_0$. $U^{S}_{0}$ itself is divided into up to
$S$
sub-buffers, each one with maximum size ${\delta/S+1}$ and associated with one
node of $P_U$: each node maintains a pointer to the beginning of a sub-buffer of
$U^{S}_{0}$ and its priority key is the smallest value in the respective
sub-buffer. Each node also contains support information such as the size of the relative sub-buffer
and the position of the smallest key in the sub-buffer.
$p_U$ points to the first element of $U^{P}_{0}$. This
structure ensures that the concatenation of a resiliently sorted $U^{S}_{0}$
with $U^{P}_{0}$ is a faithful ordering of all the elements in $U_0$ at all
times.  The priority queue $P_U$ can be built by determining the minimum element
of each sub-buffer in $U^{S}_{0}$ and then by building the
priority queue in safe memory. The priority queue $P_{D}$ and pointer $p_D$ are
analogously constructed from buffer $D_0$.
\end{itemize}

Since the priority queues $P_{I}$, $P_{U}$ and $P_{D}$ are resiliently stored,
we use any standard implementation that supports the \Peekmin{} operation, which
is an operation that returns the minimum value in the priority queue without
removing it. We note that buffer sizes (i.e., $s_i$) depend on $n$: As
suggested in~\cite{JMM07}, a global rebuilding of the resilient priority queue
is performed when the number of keys in it varies by $\BT{n}$. The rebuilding is
done by resiliently sorting all the keys and then distributing them among the
down buffers starting from $D_0$. 

The functioning and purpose of the auxiliary structures will
be detailed in the description  of the \Insert{} and \Deletemin{} operations in
Section~\ref{sec:insdel}.
We now provide an intuitive explanation of the
functioning of our priority queue. Newly inserted keys are collected in the
immediate insertion buffer $I_0$ and in the buffers pointed by nodes in $P_I$,
while 
the majority of the previously inserted values are maintained in the
up and down buffers in the $k$ layers $L_i$. 
The role of the down buffers is to contain small keys that are likely to be soon
removed by \Deletemin{} and then should move towards the lower levels (i.e.,
$I_0$, $P_I$ or $L_0$); on
the other hand, up buffers store large keys that will not be required in the
short time (note that this fact is a consequence of invariant (I2)).
Keys  are moved among
layers by means of the two fundamental primitives \Push{} and \Pull{}: these
functions, which are described in Section~\ref{sec:pushpull}, are invoked when
the up and down buffers violate the size invariants, and exploit the resilient
merging algorithm \SMerge{}. 
The purpose of the support structures is to reduce
the overhead necessary for the management of the priority queue in the presence
of errors by reducing the number of invocations to the costly
maintenance tasks (i.e., \Push{} or \Pull{}) and by amortizing their
computational cost over multiple executions of \Insert{} or \Deletemin{}.
It will be evident in the subsequent section that $P_U$ and $P_D$ may
cause a discrepancy with respect to the order invariants (I1) and (I2) for
the first $\BO{\delta}$ positions of buffers $D_{0}$ and $U_{0}$. However, we
will see that this violation can be in general ignored and can be quickly
restored any time the algorithm needs 
to exploit the invariants on $D_{0}$ and $U_{0}$, that is any time  \Push{} and
\Pull{} are invoked.

\subsubsection{\Insert{} and \Deletemin{}}\label{sec:insdel}
The implementation of \Insert{} and \Deletemin{} varies significantly with
respect to the resilient priority queue presented in~\cite{JMM07}. In
particular, the safe memory plays an important role  in order to
obtain the desired performance. 

\paragraph{\Insert{}} The newly inserted key is appended to the {immediate
insertion buffer} $I_{0}$. If after the insertion $I_{0}$ contains $\log n +
\delta / S$ keys, some values in $I_0$ are moved into other buffers as follows.
Suppose that $P_{I}$ contains less than $S$ nodes. A new buffer $I'$ is created
in the faulty memory and filled with the $\log n + \delta / S$ keys in
$I_{0}$, then a new node is inserted in $P_I$ with the minimum value in $I'$ as
key and a pointer to $I'$; $I_0$ is flushed at the end of this operation.
Suppose now that  $P_{I}$ contains $S$ nodes. All keys in buffer $I_0$, in the
buffers
pointed by all nodes of $P_I$ and in the sub-buffers managed through $P_U$
(i.e., in buffer $U^S_0$) are resiliently sorted using the \SSort{} algorithm.
These values are then merged with those in $U^P_{0}$ (if any)  using \SMerge{},
and finally inserted into buffer $U_{0}$. After the merge, the immediate
insertion buffer, the priority queue $P_{I}$ and all its associated buffers  are
emptied. If the merge does not cause $U_{0}$ to overflow, the priority queue
$P_{U}$  is rebuilt from the new values in $U_0^S$  by following the previously
described procedure.
On the contrary, if $U_{0}$ overflows breaking the size invariant (I4), the
\Push{} primitive is invoked on $U_{0}$, $P_{U}$ is deallocated (since \Push{}
removes all keys in $U_0$) and $P_{D}$ is rebuilt following a procedure similar
to the one for $P_{U}$.

\paragraph{\Deletemin{}} To determine and remove the minimum key in the priority
queue it is necessary to evaluate the minimum key among the at most $\log n +
\delta / S$ keys in the immediate insertion buffer $I_{0}$ and the minimum
values in $P_{I}$, $P_{D}$ and $P_{U}$, which can be evaluated using \Peekmin{}.
Finally, the minimum key $v$ among these four values is selected, removed from
the appropriate buffer as described below, and hence returned. The removal of
$v$ is performed as follows.

\begin{itemize}
 \item \emph{$v$ is in $I_{0}$.} Value $v$ is removed
from $I_0$ and the remaining keys in $I_0$ are shifted in order to 
ensure that keys are consecutively stored.

\item \emph{$v$ is in  $P_{I}$.}  A \Deletemin{} is performed on $P_I$ for
removing the node with key $v$. Let $I'$ be the buffer pointed by this node.
Then key $v$ is removed from $I'$ and the remaining keys in $I'$ are shifted in
order to ensure that keys are consecutively stored. We note that the value $v$
may not be  anymore available in $I'$ since it has been corrupted by the
adversary: however, since each node contains the position of $v$ in $I'$, the
faithful value can be restored. Let $c_{I'}$ be the new size of $I'$.  If
$c_{I'} \geq (\log n + \delta / S)/2$, a new node pointing to $I'$ is inserted
in $P_I$ using as priority key the new minimum value in $I'$. If $c_{I'} < (\log
n + \delta / S)/2$ and $I_{0}$ is not empty, up to $(\log n + \delta / S)/2$
keys are removed from the immediate insertion buffer and inserted in buffer
$I'$; then, a new node is inserted in $P_I$ pointing to $I'$ and with priority
key set to the new minimum value in $I'$. Finally, if $c_{I'} < (\log n + \delta
/ S)/2$ and $I_{0}$ is empty, all values in $I'$ are transferred in the
immediate insertion buffer $I_0$ and $I'$ is deallocated.

\item \emph{$v$ is in  $P_{U}$.} A \Deletemin{} is performed on $P_{U}$ for
removing the node with key $v$. Let $U'$ denote the sub-buffer  pointed by the
removed node. The minimum key $v$ is removed from $U'$ and its spot is filled
with the value pointed by  $p_{U}$, which is then increased to point to the
subsequent value in $U^P_0$ (if any). If no key can moved to $U'$ (i.e., there
are no keys in $U_0^P$), the empty spot is removed by compacting $U'$ in
order to ensure that keys are consecutively stored and no further operations are
performed. The new minimum value in $U'$ is then evaluated and inserted in
$P_{U}$ with the associated pointer to $U'$ (no operation is done if $U'$ is
empty). 

\item \emph{$v$ is in  $P_{D}$.} 
Operations similar to the previous case are performed if the minimum key
is extracted from $P_{D}$. In this case, \Deletemin{} may cause $D_{0}$ to
underflow breaking the
size invariant I3: if that happens, the \Pull{} primitive is invoked on $D_{0}$
and $P_{D}$ is rebuilt following a procedure analogous to the one previously
detailed for $P_U$. 
\end{itemize}

We observe that the use of the auxiliary structures $P_U$ and $P_D$ in
\Deletemin{} may cause a discrepancy with
respect to the order invariants (I1) and (I2) for buffers $U_{0}$ and $D_{0}$.
We can however justify the waiver from (I1) by pointing out that this
structure still ensures that the faithful keys in $U^{S}_{0}$ are smaller than
or equal to those in $U^{P}_{0}$. In particular the concatenation of a
resiliently sorted $U^{S}_{0}$ with $U^{P}_{0}$ is faithfully ordered
(similarly in $D_0$).
Additionally, we can justify the waiver from (I2) by
observing that the faithful keys in $D_{0}$
are still smaller than or equal to those in $D_{1}$ and $U_{1}$. Furthermore,
the invariants can be easily restored before any invocation of \Push{} and
\Pull{} by resiliently sorting $U^{S}_{0}$ (resp., $D^{S}_{0}$) and
linking it with $U^{P}_{0}$ (resp., $D^{P}_{0}$).
Therefore, since  $D_{0}$ and $U_{0}$ still behave
consistently with the invariants for what pertains the relations with other
buffers and the possibility of accessing the faithful keys maintained by them in
the correct order,  we can assume with a slight (but harmless) ``abuse of
notation'' that the invariants are verified for $U_{0}$ and $D_{0}$ as well.

\subsubsection{\Push{} and \Pull{} primitives}\label{sec:pushpull}
\Push{} and \Pull{} are the two fundamental primitives used to structure and
maintain the resilient priority queue. Their execution is triggered whenever one
of the buffer violates a size invariant in order to restore it without affecting
the order invariants. The primitives operate by redistributing keys among
buffers by making use of \SMerge{}. The main idea is to move keys in the buffers
in order to have the smaller ones kept in the layers close to the insertion
buffer so they can be quickly retrieved by \Deletemin{} operations,  while
moving the larger keys to the higher order layers. Our implementation of \Push{}
and \Pull{} corresponds to the one in~\cite{JMM07} with the difference that the
\SMerge{} algorithm proposed in the previous section is used rather than the
merge algorithm in~\cite{FGI09}. It is important to stress how this variation,
while allowing a reduction of the running time of \Push{} and \Pull{}, does not
affect the correctness nor the functioning of the primitives since the merge
algorithm is used with a \emph{black-box} approach. We remark that, due to the
additional structure introduced by using the auxiliary priority queues $P_U$ and
$P_D$, every time a primitive involving either $U_0$ or $D_0$  is invoked it is
necessary to restore them to be faithfully ordered buffers. This can be easily
achieved by concatenating the resiliently sorted $U^P_0$ (resp., $D^S_0$), with
$U^S_0$ (resp., $D^S_0$). We can exploit $P_U$ (resp., $P_D$) to resiliently
sort  $U^P_0$ (resp., $D^S_0$) by successively extracting the minimum values in
the priority queue. For the sake of completeness, we describe now the \Push\ and
\Pull\ primitives and we refer to~\cite{JMM07} for further details.

\paragraph{\Push{}} The \Push{} primitive is invoked whenever the size of an up
buffer $U_{i}$ grows over the threshold value $s_{i}/2$, therefore breaking the
size invariant (I4). 
The execution of \Push{}($U_{i}$) works as follows.
If $L_i$ is the last layer, then a new empty layer $L_{i+1}$ is
created. Buffers $U_{i}$, $D_{i}$ and
$U_{i+1}$ are merged into a sequence $M$ using the \SMerge{} algorithm. Then
the first $\vert D_{i} \vert - \delta$ keys of $M$ are placed in a new buffer
$D'_{i}$, the remaining $\vert U_{i+1} \vert + \vert U_{i} \vert + \delta$
keys are placed in a new buffer $U'_{i+1}$, and  an empty $U'_{i}$
buffer is created. Finally, the newly created buffers $U'_{i}$, $D'_{i}$ and
$U'_{i+1}$ are  used to respectively replace the old buffers $U_{i}$,
$D_{i}$ and $U_{i+1}$, which are then deallocated. If $L_{i}$ is the last layer,
$U'_{i+1}$ replaces $D_{i+1}$ instead of $U_{i+1}$. If the new buffer $U'_{i+1}$
contains too many keys, breaking  the size invariant (I4), the \Push{} primitive
is invoked on $U'_{i+1}$. Furthermore, since $D'_{i}$ is smaller than $D_{i}$,
it could violate the size invariant (I3). This violation is handled at the end
of the sequence of \Push{} invocations on up buffers of layers
$L_{i},L_{i+1},\ldots,L_{j}$, $0 \leq i < j < k$ (we suppose the $i$ and $j$
indexes to be stored in safe memory).  After all the $j-i+1$ invocations, the
affected down buffers are analyzed by simply following the pointers among
buffers starting from $U_{i}$, and by invoking the \Pull{} primitive (see below)
on the down buffer not satisfying the invariant (I3). 

\paragraph{\Pull{}} The \Pull{} primitive is invoked whenever the size of a down
buffer $D_{i}$ goes below the threshold value $s_{i}/2$, therefore breaking the
size invariant (I3). Since this invariant does not hold for the last layer, we
must have that $L_i$ is not the last layer. During the execution of
\Pull{}($D_{i}$) buffers $D_{i}$, $U_{i+1}$, and $D_{i+1}$ are merged into a
sequence $M$ using the \SMerge{} algorithm. The first $s_{i}$ keys of $M$ are
placed in a new buffer $D'_{i}$, the following $\vert D_{i+1} \vert -
\left(s_{i} - \vert D_{i}\vert \right)-\delta$ keys are written to $D'_{i+1}$,
while the remaining keys in $M$ are placed in a new buffer $U'_{i+1}$. The newly
created buffers $D'_{i}$, $D'_{i+1}$ and $U'_{i+1}$ are then used to
respectively replace the old buffers $D_{i}$, $D_{i+1}$ and $U_{i+1}$, which are
then deallocated. If the down and up buffers in layer $L_{i+1}$ are empty after
this operation, then layer  $L_{i+1}$ is removed (this can happen only if
$L_{i+1}$ is the last layer). Resulting from this operation, $D'_{i+1}$ may
break the size invariant (I3), if this is the case \Pull{} is invoked on
$D'_{i+1}$. Additionally, after the merge, $U′_{i+1}$ may break the size
invariant (I4). 
This violation is handled at the end of the
sequence of \Pull{} invocations on down buffers of layers
$L_{i},L_{i+1},\ldots,L_{j}$, $0
\leq i < j < k$ (we suppose the $i$ and $j$ indexes to be stored in safe
memory).  After all the $j-i+1$ invocations, all the affected up buffers are
analyzed by
simply following the pointers among buffers starting from $D_{i}$, and by
invoking the \Push{} primitive wherever invariant (I4) is not
satisfied.

\subsection{Correctness and complexity analysis}\label{sec:corr}
In order to prove the correctness of the proposed resilient priority queue we
show that \Deletemin{} returns the minimum faithful key in the priority queue or
an even smaller corrupted value. As a first step, it is necessary to ensure that
the invocation of one of the primitives \Push{} or \Pull{}, triggered by an up
or down buffer violating a size invariant I3 or I4, does not cause the order
invariants to be broken. The \Push{} and \Pull{} primitives used in our priority
queue coincide with the ones presented for the maintenance of the resilient
priority queue in~\cite{JMM07}: despite the fact that in our
implementation the threshold $s_{i}$ is changed to $2^{i+1}\left( S\log^{2}+
\delta \left( \log S + \delta / S \right) \right)$, the proofs provided
in~\cite{JMM07} (Lemmas 1 and 3) concerning the correctness of \Push{} and
\Pull{} still apply in our case. We report here the statements of the cited
lemmas: 

\begin{lemma}[{\cite[Lemma~1]{JMM07}}]\label{lem:1}
The \Pull{} and \Push{} primitives preserve the order invariants.
\end{lemma}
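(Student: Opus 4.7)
The plan is to invoke Lemma~\ref{lem:merge} to conclude that the sequence $M$ returned by \SMerge{} is faithfully ordered, and then to verify each order invariant directly on the slices of $M$ that replace the old buffers. Because \Push{} and \Pull{} differ from the primitives in~\cite{JMM07} only in the choice of merging subroutine, and because \SMerge{} delivers the same faithful-order guarantee as the merge algorithm of~\cite{FGI09}, the argument of~\cite[Lemma~1]{JMM07} should transfer essentially verbatim; the altered threshold $s_i$ plays no role here since it influences only the size invariants (I3) and (I4).

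For invariant (I1), I would note that each new buffer --- $D'_i$ and $U'_{i+1}$ in the \Push{} case, and $D'_i$, $D'_{i+1}$, $U'_{i+1}$ in the \Pull{} case --- is a contiguous slice of $M$, so each inherits faithful order from $M$, while all other buffers are untouched.

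For invariant (I2), the interesting case is \Push{}$(U_i)$, where I must show that the concatenations $D'_i D_{i+1}$ and $D'_i U'_{i+1}$ (and, symmetrically, that $D_{i-1} D'_i$ and $D_{i-1} U'_i$ for the now-empty $U'_i$) are faithfully ordered. The $D'_i U'_{i+1}$ case is immediate, since both slices come from the sorted $M$. For $D'_i D_{i+1}$, I must prove that every faithful key placed in $D'_i$ is no larger than every faithful key of the unchanged $D_{i+1}$. By the pre-existing (I2) we know that every faithful key of $D_i$ is at most every faithful key of $D_{i+1}$ and of $U_{i+1}$; the new ingredient is that $M$ also contains keys coming from $U_i$. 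The $\delta$-slack in the split --- keeping only $|D_i|-\delta$ keys in $D'_i$ and shifting $\delta$ extra keys into $U'_{i+1}$ --- is precisely designed to absorb the rightward displacement caused by up to $\delta$ corrupted keys that could sort to artificially small positions in $M$. Counting faithful keys and charging at most $\delta$ total corruptions across the three input buffers, every faithful key surviving in $D'_i$ already lay among the faithful keys of $D_i$, and so is $\leq$ every faithful key of $D_{i+1}$. A symmetric argument, with split offsets $s_i$ and $|D_{i+1}| - (s_i - |D_i|) - \delta$ as specified in the pseudocode, handles \Pull{}$(D_i)$.

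The main obstacle is the bookkeeping of this $\delta$-slack across a three-way merge: for every faithful key one must identify which input buffer it came from and invoke the appropriate pre-existing order invariant on that source, while absorbing up to $\delta$ corrupted keys that the adversary may have placed anywhere in $M$. Since exactly this accounting is carried out in detail in~\cite[Lemma~1]{JMM07} and the splitting offsets used in our implementation are identical to those used there, once I have verified the two black-box ingredients --- that $M$ is faithfully ordered (Lemma~\ref{lem:merge}) and that the modified $s_i$ is irrelevant to ordering --- I would close the proof by direct appeal to that lemma.
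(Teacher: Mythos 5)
Your approach matches the paper's: the paper does not re-prove this lemma but simply notes that the \Push{} and \Pull{} primitives coincide with those of~\cite{JMM07} modulo the black-box substitution of \SMerge{} (which, by Lemma~\ref{lem:merge}, supplies the same faithful-order guarantee as the merge of~\cite{FGI09}) and that the altered threshold $s_i$ affects only sizes, not ordering, so~\cite[Lemma~1]{JMM07} applies verbatim. Your extra sketch of the $\delta$-slack accounting is more than the paper offers, but the logical structure and the final appeal to~\cite{JMM07} are the same.
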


\begin{lemma}[{\cite[Lemma~3]{JMM07}}]\label{lem:2}
If a size invariant is broken for a buffer in $L_0$, invoking \Pull{} or \Push{}
on that buffer restores the invariants. Furthermore, during this operation
\Pull{} and \Push{} are invoked on the same buffer at most once. No other
invariants are broken before or after this operation.
\end{lemma}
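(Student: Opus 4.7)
The plan is to argue by case analysis on which size invariant is first broken at layer $L_0$: either $|U_0| > s_0/2$ (so I4 fails and \Push\ is invoked on $U_0$), or $|D_0| < s_0/2$ (so I3 fails and \Pull\ is invoked on $D_0$). The two cases are dual, so I would write out the \Push\ case in detail and sketch the symmetric \Pull\ case. In each case the argument reduces to careful bookkeeping of buffer sizes as the primitive cascades upward, combined with the post-condition of Lemma~\ref{lem:1} ensuring that order invariants are preserved throughout.

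For the \Push\ case, the first step is to track how $|U_i|$, $|D_i|$, $|U_{i+1}|$ change under a single \Push{}$(U_i)$: by the redistribution rule, $U_i$ becomes empty, $|D_i|$ decreases by exactly $\delta$, and $|U_{i+1}|$ grows by $|U_i|+\delta$. I would then show by induction on $i$ that the cascade of \Push\ invocations moves strictly upward through layers $L_0,L_1,\ldots$, using the geometric growth $s_{i+1}=2s_i$ to absorb both the merged-in keys and the additive $\delta$ at each level. The cascade must halt at some topmost layer $L_j$, either because $L_j$ is the last layer (in which case $U'_{j+1}$ replaces $D_{j+1}$ and does not trigger further \Push) or because the new $|U'_{j+1}|\leq s_{j+1}/2$.

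The second step handles the side effect that each \Push{}$(U_i)$ drops $|D_i|$ by $\delta$ and so may break I3 for $D_i$. A single sweep of \Pull\ invocations, one per violating down buffer and scanned from the top downward following the buffer pointers, restores I3. I would verify by a symmetric accounting argument that each \Pull{}$(D_i)$ in this sweep merges $D_i$, an already-emptied $U_{i+1}$, and a $D_{i+1}$ that is itself at most $\delta$ short, and that its redistribution succeeds at every level without producing any $|U'_{i+1}| > s_{i+1}/2$ that would trigger a new \Push. Here the geometric doubling $s_{i+1}=2s_i$ combined with the $\delta(\log S+\delta/S)$ slack baked into the threshold $s_i = 2^{i+1}(S\log^{2}n + \delta(\log S + \delta/S))$ is what prevents the $\pm\delta$ perturbations from cascading back down to $L_0$.

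Finally, the ``at most once per buffer'' claim follows by inspection of which buffers each primitive touches: \Push{}$(U_i)$ reads and rewrites only $U_i,D_i,U_{i+1}$, and \Pull{}$(D_i)$ reads and rewrites only $D_i,U_{i+1},D_{i+1}$, while the primary cascade and the secondary sweep act on disjoint buffer types (up buffers for \Push, down buffers for \Pull) and proceed strictly upward and then strictly downward along the layer list. The main obstacle I expect is the coupled size bookkeeping across the \Push{} and \Pull{} sweeps: one must show that the $\delta$-sized slack lost on down buffers during the \Push\ cascade and recovered during the \Pull\ sweep never accumulates to overflow a buffer that has just been vacated, which is precisely the phenomenon that the chosen form of $s_i$ is tailored to absorb.
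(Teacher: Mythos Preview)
The paper does not actually prove this lemma: it states the result, cites \cite[Lemma~3]{JMM07}, and remarks only that the original proof is independent of the specific threshold value $s_i$ and therefore carries over verbatim. Your proposal, by contrast, sketches the underlying combinatorial argument in some detail, which is more than the paper itself attempts.

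Your outline is broadly in the right spirit, but there is one place where it is too quick. You assert that the secondary \Pull\ sweep following a \Push\ cascade ``succeeds at every level without producing any $|U'_{i+1}|>s_{i+1}/2$ that would trigger a new \Push{}'', and defer this to an unspecified ``symmetric accounting argument''. But the paper's own description of \Pull\ explicitly allows $U'_{i+1}$ to overflow after a \Pull, and then prescribes a subsequent \Push\ sweep over the affected up buffers. The ``at most once per buffer'' claim therefore hinges on showing that this tertiary \Push\ sweep (after \Pull\ after \Push) cannot revisit any up buffer already emptied in the first \Push\ cascade, and symmetrically for \Pull. That is precisely the delicate step, and it is where the exact form of $s_i$ and the $\pm\delta$ perturbations must be tracked against the geometric doubling $s_{i+1}=2s_i$; your sketch identifies that this is the crux but does not supply the inequality that closes it. Since the paper simply points to \cite{JMM07} for this, you would need to consult that reference to see how the bookkeeping is actually discharged.
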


For the complete proofs of these lemmas we refer the reader to the original work
in~\cite{JMM07}. It is important to remark that both proofs are independent of
the value used as size threshold and hence these proofs hold for our
implementation as well.
We can therefore conclude that when a size invariant is broken for a buffer in
$L_{i}$ the consequent invocation of \Push{} or \Pull{} does indeed restore the
size invariant while preserving the order invariants which are thus maintained
at all times.

Concerning the computational cost of the primitives, an analysis carried out
using the potential function method~\cite[Section 17.3]{Cormen09} allows to
conclude that the amortized time
needed for the execution of both \Push{} and \Pull{} is negligible. A proof of
this fact can be obtained by plugging the complexity of the \SMerge{} algorithm 
and the threshold value $s_i$ defined in our implementation in the proof
proposed in~\cite[Lemma 5]{JMM07}.

\begin{lemma}\label{lem:pushspull}
The amortized cost of the \Push{} and \Pull{} primitives is negligible. 
\end{lemma}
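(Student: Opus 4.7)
The plan is to adapt the potential-function amortization of~\cite{JMM07} (Lemma~5) to our setting, replacing their merge primitive with \SMerge{} (whose cost is given by Lemma~\ref{lem:merge}) and using the enlarged threshold $s_i = 2^{i+1}(S\log^2 n + \delta(\log S + \delta/S))$ in place of their $2^{i+1}\delta$. Since \Push{} and \Pull{} invoke \SMerge{} in a black-box manner and the structural invariants of Lemmas~\ref{lem:1} and~\ref{lem:2} are independent of the threshold value, the asymptotic argument transfers almost verbatim; only the numerical bookkeeping needs to be redone.

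First, I would invoke Lemma~\ref{lem:merge} to bound the actual cost of a single \Push$(U_i)$ or \Pull$(D_i)$: these primitives merge faithfully ordered buffers of total length $\Theta(s_i)$ and hence incur cost $\BO{s_i + \alpha_i(\delta/S + \log S)}$, where $\alpha_i \leq \alpha$ counts the faults charged to that particular invocation. Next I would define a potential $\Phi = \sum_i c_i\bigl(|U_i| + (s_i - |D_i|)\bigr)$ with constants $c_i = \Theta(2^{-i})$ chosen as in~\cite{JMM07}, so that each \Insert{} or \Deletemin{} raises $\Phi$ by $\BO{\log n}$, while each \Push/\Pull{} at level $i$ releases $\Omega(s_i c_i) = \Omega(1)$ units of potential per unit of merge work. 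This pays for the $\BO{s_i}$ fault-free part of every invocation and yields an amortized $\BO{\log n}$ Push/Pull contribution per operation in the absence of faults.

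For the fault term, I would argue separately that the total fault overhead across all \Push/\Pull{} invocations is $\BO{\alpha(\delta/S + \log S)}$, since each fault is charged to at most one merge. Using $\alpha \leq \delta$ together with our choice $s_i = \Omega(\delta(\log S + \delta/S))$, and noting that $\log S \leq \log n$, this total is absorbed into the $\BO{\log n + \delta/S}$ per-operation amortized bound already paid for by \Insert{} and \Deletemin{} themselves. Hence the combined amortized contribution of \Push{} and \Pull{} is dominated by the $\BO{\log n + \delta/S}$ amortized cost of the enclosing operations, i.e.\ it is \emph{negligible} in this sense.

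The main obstacle I anticipate is verifying that cascading chains of \Push/\Pull{} triggered by a single size-invariant violation do not break the amortization: Lemma~\ref{lem:2} guarantees that each buffer is touched at most once per chain, so the geometric sum $\sum_\ell s_\ell$ telescopes within the potential drop at the highest affected layer. This step mirrors the computation in~\cite{JMM07} and relies only on the geometric growth of $s_i$, which our enlarged threshold preserves. No new technical ideas are required beyond substituting the improved per-fault cost of \SMerge{} and absorbing the resulting term into the $\delta/S$ slack built into our threshold.
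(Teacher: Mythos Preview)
Your high-level plan---reuse the potential argument of~\cite{JMM07}, substitute the cost of \SMerge{} from Lemma~\ref{lem:merge}, and exploit the enlarged threshold $s_i$---is exactly the paper's approach. However, the concrete potential you propose does not work. With weights $c_i=\Theta(2^{-i})$, a \Push{} at level $i$ empties $U_i$ and shifts its contents to $U_{i+1}$, releasing only
\[
(c_i-c_{i+1})\,|U_i|\;=\;\Theta\bigl(2^{-i}\cdot s_i\bigr)\;=\;\Theta(s_0)
\]
units of potential, an amount \emph{independent of $i$}; but the actual merge cost is $\Theta(s_i)=\Theta(2^{i}s_0)$, so the drop cannot cover high-level pushes. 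Your claim that each \Insert{} raises $\Phi$ by $O(\log n)$ is likewise inconsistent with $c_i=\Theta(2^{-i})$: adding one key to $U_0$ increases $\Phi$ by only $c_0=\Theta(1)$.

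The potential actually used in~\cite{JMM07} and in the paper is
\[
\Phi=\sum_i\bigl(c_1\,|U_i|(\log n-i)+c_2\,i\,|D_i|\bigr),
\]
with constants $c_1,c_2$ independent of $i$. Moving $|U_i|$ keys from level $i$ to $i+1$ then releases $c_1|U_i|=\Theta(s_i)$, matching the merge cost; the residual term $c_1\delta\log n$ is dominated by $|U_i|\geq s_i/2$ precisely because of the $S\log^2 n$ and $\delta^2/S$ summands built into the threshold (one of them always exceeds $\delta\log n$). The paper also absorbs the fault overhead $\alpha(\delta/S+\log S)$ directly into $\Theta(|U_i|)$, since $|U_i|\geq s_i/2\geq\delta(\delta/S+\log S)\geq\alpha(\delta/S+\log S)$; this avoids your separate ``each fault is charged to at most one merge'' argument, which as stated is not obviously true across a cascade. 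Finally, the paper explicitly checks that the $\BO{\delta(\log S+\delta/S)}$ cost of restoring the faithful order of $U_0$ and $D_0$ before a primitive touches them is dominated by the merge time---a step your outline omits.
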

\begin{proof}
We now upper bound the amortized cost of a call to the \Push\ function  on the
up buffer $U_i$ and we ignore at the moment the subsequent chain of calls to
\Push\ and \Pull  (a similar argument applies to \Pull).  The cost is 
computed by exploiting the
following \emph{potential function} defined in~\cite{JMM07}:
\begin{equation*}
\Phi = \sum_{i=1}^k \left( c_{1}  \vert U_{i} \vert\left( \log n - i\right) +
ic_{2}\vert D_{i} \vert \right).
\end{equation*}
When a \Push{} operation on $U_{i}$ is performed, first the $U_{i}$, $D_{i}$ and
$U_{i+1}$ buffers are merged and then the sorted values are distributed into new
buffers such that $\vert U'_{i} \vert = 0$, $\vert D'_{i} \vert = \vert D_{i}
\vert - \delta $ and $\vert U'_{i+1} \vert= \vert U_{i+1} \vert + \vert U_{i}
\vert + \delta
$. This leads to the following change in potential $\Delta \Phi$:
\begin{equation*}
\begin{split}
\Delta \Phi &= - c_{1} \vert U_{i} \vert \left( \log n -i \right) - i
c_{2}\delta  + c_{1}\left( \vert U_{i} \vert + \delta \right) \left( \log n -
\left( i+1
\right)\right)\\
&=  - c_{1} \vert U_{i} \vert +\delta \left(  - i c_{2}  + c_{1} \log n -
i c_{1} - c_{1} \right).
\end{split}
\end{equation*}
\Push{} is invoked when (I4) is not valid for $U_{i}$ and therefore $\vert U_{i}
\vert > s^{i}/2 = 2^{i}\left(S \log^{2} n + \delta \left( \log S + \delta / S
\right) \right)$. Then, standard computations show that, for some constant $c' >
0$ independent of $c_1$, we have
\begin{equation*}
\Delta \Phi \leq - c_{1} \vert U_{i} \vert + c_{1}\delta \log n \leq
-c_{1}c'\vert U_{i} \vert.
\end{equation*}
The time required for the execution of \Push{}, including the time needed to
retrieve the reliably stored pointers of the up and down buffers, is dominated
by the computational cost of merging $U_{i}$, $D_{i}$ and $U_{i+1}$ which, using
the \SMerge{} algorithm, is upper bounded by $T_m = \BO{ \vert U_{i} \vert +
\vert D_{i} \vert + \vert U_{i+1} \vert + \alpha \left( \log S + \delta / S
\right)}$. By the potential method~\cite[Section 17.3]{Cormen09}, the amortized
cost of \Push{}  follows by adding the merging time $T_m$ to the potential
variation $\Delta \Phi$. Since $\vert U_{i} \vert \in \BT{2^i\left(S \log^{2} n
+ \delta \left( \log S + \delta / S \right)\right)} $, we have
$T_m=\BT{|U_i|}=c_m |U_i|$, where $c_{m}$ is a suitable constant that depends on
\SMerge. The amortized cost of \Push{} is $(c_m-c_1c')|U_i|$ and it can be
ignored by conveniently tweaking $c_{1}$ according to the values of $c_m$ and
$c'$ so that $(c_m-c_1c')|U_i|<0$.

In the particular case for which an
invocation of \Push{} involves the buffers $U_0$ and $D_0$, we have that prior
to the standard operations, it is necessary to restore the buffers to
their faithfully sorted version by resiliently sorting $U^S_0$ (resp.,
$D^S_0$) and linking it with $U^P_0$ (resp., $D^P_0$). The time required
to accomplish these operation is dominated by the time necessary to faithfully
sort $U^S_0$ and $D^S_0$ according to the previously described technique, which
is $\BO{\delta\left(\log S + \delta/S\right)}$. This implies that the time
required
for restructuring $U_0$ and $D_0$ is sill dominated by the time required
by \Push{} and is hence negligible.

Since each \Push\ and \Pull\ function is invoked on the
same buffer at most once (Lemma~\ref{lem:2}) and the amortized cost is
negative, we have that the chain of \Push\ and \Pull\ operations that can
start after the initial call is  negligible as well.  The lemma follows.

\end{proof}

The following theorem evaluates the amortized cost of \Insert{} and
\Deletemin{} in our resilient implementation of the priority queue. 

\begin{theorem}
In the proposed resilient priority queue implementation, the \Deletemin{}
operation returns the minimum faithful key in the priority queue or an even
smaller corrupted one and deletes it. Both \Deletemin{} and \Insert{} operations
require $\BO{\log n + \delta / S}$ amortized time. The priority queue uses
$\BT{S}$   safe memory words and $\BT{n}$ faulty memory words.
\end{theorem}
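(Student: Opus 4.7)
The plan is to establish separately the three assertions of the theorem: correctness of \Deletemin{}, the amortized time bound, and the space bounds. For correctness I would argue that the minimum faithful key stored in the queue is always at one of four ``frontier'' locations: the immediate insertion buffer $I_0$, some buffer pointed by a node of $P_I$, the sub-buffer region $U_0^S$ of $U_0$, or $D_0$. This follows from three observations: all relevant buffers are faithfully ordered (by Lemma~\ref{lem:1} and by the text's discussion that $U_0^S U_0^P$ is faithfully ordered once $U_0^S$ is resiliently sorted, and likewise for $D_0$); invariant (I2) implies that every faithful key in any $L_i$ with $i\geq 1$ is at least the smallest faithful key of $D_0$; and newly inserted keys enter the queue only through $I_0$. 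The four \Peekmin{} queries performed by \Deletemin{} (one direct on $I_0$ and one each on the safe priority queues $P_I$, $P_U$, $P_D$) thus yield a candidate for the overall minimum. Since $P_I$, $P_U$, $P_D$ reside in safe memory, each priority key they return is either the faithful minimum of its associated buffer or a corrupted value that is even smaller; hence the smallest of the four candidates is the faithful minimum of the queue or a smaller corrupted value, which is what \Deletemin{} then removes.

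For the amortized time analysis I would extend the potential-function argument of Lemma~\ref{lem:pushspull} and analyze \Insert{} and \Deletemin{} separately, treating the chain of \Push{}/\Pull{} invocations as free by their negative amortized cost. An \Insert{} does $\BO{1}$ work to append to $I_0$. Every $\log n + \delta/S$ inserts, $I_0$ is flushed to a new buffer and $P_I$ is updated at cost $\BO{\log n + \delta/S + \log S}$, amortizing to $\BO{1}$ per insert. Every $S(\log n + \delta/S) = S\log n + \delta$ inserts, $P_I$ overflows and the contents of $I_0$, the $S$ buffers pointed to by $P_I$, and $U_0^S$ are resiliently sorted with \SSort{} and merged into $U_0^P$ with \SMerge{}. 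By the sorting theorem and Lemma~\ref{lem:merge}, this costs $\BO{(S\log n + \delta)\log n + \alpha_j(\log S + \delta/S)}$ where $\alpha_j$ counts faults during this offload; the deterministic $(S\log n+\delta)\log n$ term amortizes to $\BO{\log n}$ per insert, and $\sum_j \alpha_j \leq \alpha$ is charged across operations. A \Deletemin{} analogously pays $\BO{\log S}$ per call for the \Peekmin{}/\Deletemin{} operations on the safe priority queues, plus the same amortized charges for rare refills of $I_0$ from a $P_I$-buffer and for rebuilding $P_U$ or $P_D$ after a layer-$0$ modification.

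For the space bounds I would verify that $P_I, P_U, P_D$ each host at most $S$ nodes in safe memory and the algorithm uses $\BO{S}$ control words, giving $\BT{S}$ safe memory overall. In the faulty memory, the invariant $|U_i|, |D_i| \leq s_i = 2^{i+1}(S\log^2 n + \delta(\log S + \delta/S))$ combined with the global rebuilding triggered when the key count changes by $\BT{n}$ ensures $s_k = \BO{n}$, so the geometric sum over layers is $\BT{n}$; the $I_0$ buffer and the $P_I$-pointed buffers add $\BO{S\log n + \delta} = \BO{n}$, and the reliably-written per-layer metadata adds $\BO{\delta \log n} = \BO{n}$. The main obstacle is the careful amortization of the fault-related terms: the threshold $s_i$ is calibrated precisely so that for every \SMerge{} invocation during a \Push{} or \Pull{} on layer $L_i$ the overhead $\alpha_i(\log S + \delta/S)$ is absorbed into the $\BT{s_i}$ linear term (since $\alpha_i \leq \delta \leq s_i/(\log S + \delta/S)$), yielding $\BO{1}$ amortized cost per key moved through the layer. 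The delicate step will be verifying that the potential-function accounting of Lemma~\ref{lem:pushspull} continues to dominate all extra contributions, including the $U_0^S$/$D_0^S$ restoration sorts that precede every \Push{}/\Pull{} touching layer $L_0$ and the fault term in the $P_I$-overflow sort above.
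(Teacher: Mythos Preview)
Your proposal follows essentially the same route as the paper's proof: correctness via the four-candidate minimum argument, amortization via Lemma~\ref{lem:pushspull} so that \Push{}/\Pull{} cost is absorbed, and the periodic offload from $P_I$ amortized over $\Theta(S\log n+\delta)$ insertions. Two points deserve tightening, however.

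First, you omit the amortized time cost of the global rebuilding. The data structure is rebuilt (via \SSort{}) whenever the number of stored keys changes by a $\Theta(n)$ factor; this costs $\BO{n\log n + \delta(\delta/S+\log S)}$ and must be shown to amortize to $\BO{\log n+\delta/S}$ per operation. The paper handles this explicitly; your proposal mentions the rebuilding only in the space argument.

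Second, two of your bounds are loose or unjustified as written. For \Deletemin{} you quote $\BO{\log S}$ per call for the safe-queue operations, but the dominant per-call cost is the linear scan of $I_0$ and the shift/minimum-scan of the affected buffer, each $\BO{\log n+\delta/S}$; this is exactly the target bound, so it matters. In the space analysis you assert the reliably written per-layer metadata totals $\BO{\delta\log n}=\BO{n}$, but $\delta\log n=\BO{n}$ is not assumed anywhere. The paper sidesteps this by arguing the $\BO{\delta}$ metadata per layer is dominated by the $\Omega(s_0)\geq\Omega(\delta)$ keys stored in that layer's down buffer, so the metadata is proportional to the stored keys; equivalently you could use that the number of layers is at most $\log(n/s_0)$ rather than $\log n$, giving $\BO{\delta\log(n/\delta)}=\BO{n}$.
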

\begin{proof}
We first observe that the size and order invariants can be
considered maintained at all times thanks to the maintenance \Push\ and \Pull\
tasks (see Lemmas~\ref{lem:1} and~\ref{lem:2}), with the aforementioned
exception on the first $\delta+1$ keys in the up and down buffers in $L_0$. 
Moreover, by Lemma~\ref{lem:pushspull},  the cost of \Push{} and \Pull{} can be
ignored in our argument.

We now focus on the correctness and complexity of
\Deletemin{}. Let $v_1$, $v_2$, $v_3$ and $v_4$ be the minimum values in $I_0$,
$P_I$, $P_U$ and $P_D$, respectively. \Deletemin{} evaluates these four values
by scanning all the values in $I_0$ and by performing a \Peekmin{} operation for
$P_I$, $P_U$ and $P_D$, respectively. By construction, each value in $P_{I}$ is
selected as the minimum among the keys stored in the associated buffers: since
$P_{I}$ is maintained in the safe memory, $v_2$ is smaller than any faithful
value in the associated buffers. Similarly, $v_3$ is smaller than the faithful
$\delta+1$ entries in $U_0^S$, and thus of the remaining faithful entries in
$U^P_0$ and of all entries in the up and down buffers for invariant (I1).
Similarly, we also
have that $v_4$ is smaller than all faithful keys in $D_0$. We can then conclude
that $\min\{v_1, v_2, v_3, v_4\}$ is either the minimum faithful key in the
priority queue or an even smaller corrupted value.  The time for determining the
minimum key and removing it is $\BO{\log n + \delta / S}$.

We now discuss  the correctness and complexity of \Insert{}. The correctness of
the  insertion is evident since the input key is inserted in some support buffer
and can be only removed by \Deletemin. Inserting a key in the immediate
insertion buffer requires constant time. If $I_{0}$ is full and a new node of
the priority queue $P_I$ needs to be created, a total $\BO{\log n + \delta / S
}$ time is required in order to find the minimum among the keys in $I_{0}$ and
to insert the new node in $P_{I}$. When $P_I$ itself is full (i.e., contains $S$
nodes), we have that $\BO{S \log^{2} n + \delta\left( \log n + \delta / S
\right)}$ time is  required to faithfully sort all keys in $I_0$ and in the
buffers managed through $P_I$ and $P_U$, to faithfully merge them with
$U^S_{0}$, and to rebuild $P_{U}$ and $P_S$. However, it will be necessary to
perform these operations at most once every $\BT{S\log n +\delta}$ key
insertions and therefore its amortized cost is $\BO{\log n + \delta / S}$. 

We recall that the algorithm invokes a global rebuilding every time the number
of keys changes by a $\BT{n}$ factor. Since the cost of the rebuilding is
dominated by the cost of the \SSort{} algorithm, which is $\BO{n\log n+\delta
(\delta/S +  \log S)}$, the amortized cost is $\BO{\log n + \delta/S}$.

By opportunely doubling or halving the space reserved for the immediate buffer
$I_0$, the space required for $I_0$ is always at most twice the number of keys
actually in the buffer. Additionally, the space required  for the buffers 
maintained by $P_I$ is at most double than the number of keys actually in the
buffer itself. The space required for each layer $L_{0},\ldots,L_{k-1}$ with $k
\in \BO{\log n}$, including the reliably written structural information, is
proportional to the number of stored keys, and therefore $\BT{n}$ faulty
memory words are used to store all the layers. Finally, $\BT{S}$ safe memory
words are required to maintain the priority queues $P_{I}$, $P_{D}$ and $P_{U}$
and for the correct execution of \SMerge{} and \SSort. The theorem follows.
\end{proof}

\section{Conclusion}
In this paper we have shown that, for the resilient sorting problem and the
priority queue data structure, the presence of a safe memory of size
$S$ can be exploited in order to reduce the computational overhead due to the
presence of corrupted values by a factor $\BT{S}$. As future research, it
would be interesting to investigate which other problems can benefit of a non
constant safe memory and propose tradeoffs highlighting the achievable
performance with respect to the size of the available safe memory. We observe
that not all problems can in fact exploit an $S$-size safe
memory: indeed the the $\BOM{\log n +\delta}$ lower bound for searching derived
in \cite{FI08}  applies even if a safe memory of size $S\leq \epsilon n$,
for a suitable constant $\epsilon\in (0,1)$, is available. Finally, we remark
that the analysis of tradeoffs between the safe memory size and
the performance achievable by resilient algorithms may provide useful insights
for designing hybrid systems mounting both cheap faulty memory and expensive ECC
memory, as recently studied in~\cite{LI13}.

\section*{Acknowledgements}
The authors would like to thank G. Brodal, I. Finocchi and an anonymous
reviewer for useful
comments. This work was supported, in part, by University of Padova under
projects STPD08JA32 and CPDA121378, and by MIUR of Italy under project AMANDA.

\end{document}